\documentclass[12pt]{article}

\usepackage{amsmath}
\usepackage{mathrsfs}
\usepackage{amsfonts}
\usepackage{graphicx} 
\usepackage{setspace} 
\usepackage{titletoc} 
\usepackage{amssymb}
\usepackage{setspace}
\usepackage{lineno}
\usepackage{amsthm}
\usepackage{booktabs}
\usepackage{xcolor}
\usepackage{soul}
\newtheorem{theorem}{Theorem}

\usepackage[top=2.54cm, bottom=3.81cm, left=2.54cm, right=2.54cm]{geometry}

\usepackage{multirow} 
\usepackage{listings} 
\usepackage{pdfpages} 
\usepackage[hidelinks]{hyperref} 

\usepackage[backend=bibtex,style=numeric,sorting=none]{biblatex}
\renewbibmacro{in:}{}

\title{A Combined ODE Model of Carbohydrate Fermentation and Colorectal Cancer}
\author{A. Lawryshyn, H.J. Eberl, T. Hillen }
\date{}

\begin{document}

\maketitle

\pagenumbering{arabic} \setcounter{page}{1}

\begin{abstract}
    We formulate and analyze a system of non-linear ordinary differential equations that describe key metabolic and immunological interactions between butyrate produced by fiber-fermenting gut microbiota, colorectal cancer cells and host cell populations. The model is studied both independently and in conjunction with a pre-existing carbohydrate fermentation model. The parameter space is explored through sensitivity analyses. Simulation experiments are conducted to illustrate the emergence of varying dynamical behaviour driven by butyrate availability. Our model predicts that butyrate production is driven by fiber consumption and further supported by probiotics in the case of microbial dysbiosis. It also suggests that butyrate may help in suppressing tumour growth. We also show that by adding noise with sufficiently high intensity, cancer elimination occurs almost surely in infinite time and that this threshold level of noise intensity decreases with increasing butyrate concentrations.
\end{abstract}

\section{Introduction}
\subsection{The Human Colon and Gut Microbiome}
The human colon is host to a diverse population of microorganisms, including bacteria, fungi, and archaea, that are collectively referred to as gut microbiota. These microbiota play an essential role in digestion, metabolism, and immune system support. The gut microbiome has become a popular research topic over the past few decades, with increasing evidence highlighting its connection to a wide range of diseases including inflammatory bowel diseases (IBD), irritable bowel syndrome (IBS), obesity, diabetes and cancer \cite{jandhyala2015role, kim2022potential}. 
The specific composition of the gut microbiome is unique to each individual, depending on factors such as genetics, diet, and antibiotic use \cite{jandhyala2015role}. 

Fiber fermenting bacteria, including species from the \textit{Bacterioides} and \textit{Bifidobacterium} genera in addition to others \cite{oliphant2019macronutrient}, are responsible for the fermentation of complex carbohydrates into short-chain fatty acids (SCFAs). Acetate, propionate and butyrate are the most abundant SCFAs in the gut. They are essential to human health, playing a significant role in energy metabolism, gut barrier strength and immune response modulation \cite{hou2022gut}. 

\subsubsection{Carbohydrate Fermentation}
The human colon is lined with a two-layer mucus membrane: an inner layer that adheres to the epithelial cells and is devoid of bacteria, and an outer, unattached layer which is colonized by microbiota \cite{johansson2011two}. The lumen of the colon is also colonized but encompasses a different microbial composition than that of the mucus layer.

Glycans (dietary fiber and mucins) are the main energy sources for carbohydrate fermenting gut microbiota. The majority of dietary fibers originate from plant cell walls, obtained by the consumption of fruits, vegetables and whole grains \cite{williams2017gut}. Dietary fiber escapes digestion in the upper gastrointestinal (GI) tract and enters the lumen of the colon through the small intestine. Conversely, mucins, which are glycoproteins produced by the host's epithelial tissue, are abundant in the mucus layer. Only certain microbial species are capable of degrading fiber and mucins into simpler sugars. These sugars are further broken down by other species into SCFAs, lactate, and gas.

\subsection{The Gut Microbiome and Colorectal Cancer}
Colorectal cancer (CRC) is the third most common type of cancer and the second leading cause of cancer-related deaths worldwide \cite{who_crc_stats}. Growing evidence highlights important connections between CRC and the gut microbiome \cite{kim2022potential, fulbright2017microbiome, wong2023gut, hibberd2017intestinal, }. In particular, it has been shown that butyrate, produced from fiber-fermenters in the gut, has an inhibitory effect on cancer cell growth but promotes that of healthy cells. This conflicting effect of butyrate, known as the ``butyrate paradox", is thought to be related to the contrasting metabolic behaviour of normal versus cancerous cells as well as the supportive role of butyrate on the immune system \cite{li2018butyrate, sun2024butyrate}. 

\subsubsection{Metabolic role of butyrate in cancer growth}
Butyrate is present in high concentrations in the human colon, and is utilized by colonocytes as their primary energy source, accounting for approximately 70\% of their total energy requirements \cite{li2018butyrate}. Unlike colonocytes, which preferentially metabolize butyrate via beta-oxidation, cancer cells undergo the Warburg effect \cite{warburg1956origin}, relying on glucose as an energy source, which favours their fast growth and survival \cite{jacquet2022searching}. Butyrate has been shown to inhibit this effect by acting as a histone deacetylase (HDAC) inhibitor, shifting cancer cell metabolism toward respiration and thereby slowing proliferation \cite{li2018butyrate}. 

\subsubsection{Butyrate affects immune response to cancer}
The immune system's response to cancer is highly complex, involving many different biochemical pathways. We focus on a high level overview, incorporating three key types of immune cells in our model - cytotoxic T cells, regulatory T cells (Tregs) and dendritic cells.

Dendritic cells capture tumour antigens and present them to T cells, causing their activation and recruitment to the tumour microenvironment, initiating an immune response \cite{marciscano2021role}. T cells can kill the cancer cells by releasing molecules that induce apoptosis \cite{ahmed2023role}. A high presence of T cells is often associated with better control of cancer progression, although certain tumours can evade destruction by impairing T cell infiltration and function \cite{pu2025t}. Tregs are important for preventing autoimmunity but in the presence of cancer, they can directly inhibit T-cells, creating an immunosuppressive environment that allows for cancer growth \cite{ohue2019regulatory}. 

Through its role as a HDAC inhibitor, butyrate increases the expression of the protein Foxp3 which is the primary transcription factor required for Treg development and function \cite{kespohl2017microbial}. Butyrate also boosts T cell function and proliferation \cite{danne2021butyrate}. Figure \ref{fig: immune cell diagram} shows a visual depiction of the key stimulatory and inhibitory interactions between butyrate, cancer cells, colonocytes, and immune cells. 

\begin{figure}
    \centering
    \includegraphics[width=1.0\linewidth]{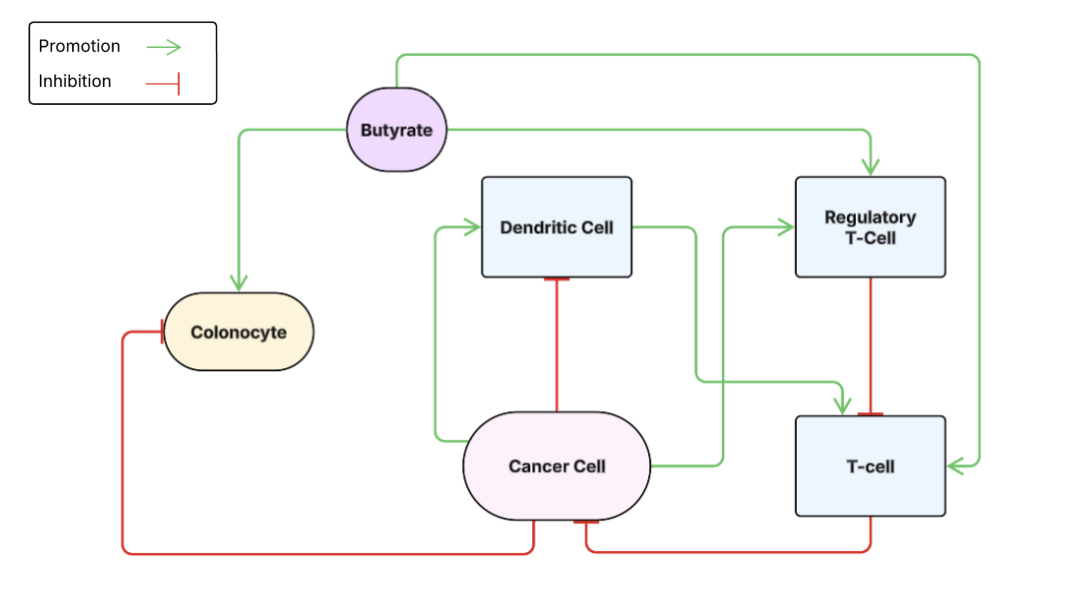}
    \caption[Diagram of immune system interactions]{Diagram of primary interactions of butyrate, cancer and normal cells with the immune system.}
    \label{fig: immune cell diagram}
\end{figure}

\subsection{Effect of Probiotic Administration on Butyrate Production and CRC Progression}

\subsection{Role of Fiber in CRC Immunotherapy}\label{subsect: immunotherapy}
Immunotherapy is a targeted treatment that helps strengthen the immune system's response to cancer. It has transformed treatment strategies against many types of cancers but immunotherapy efficacy against CRC is currently limited \cite{brahmer2012safety}. Tumours with high microsatellite instability (MSI) are often treated with Programmed Cell Death Protein 1 (PD-1) inhibitors. PD-1 is a protein on T cells that, under normal conditions, help prevent them from attacking other cells in the body. However, only 10\%-15\% of total CRC cases have high MSI, with the rest being microsatellite stable (MSS) \cite{boland2010microsatellite}. Microsatellites are short, repetitive sequences of DNA distributed throughout the genome. These regions are particularly susceptible to errors during DNA replication, most of which are corrected by the DNA mismatch repair (MMR) system. However, when the MMR system is impaired, these errors accumulate and can lead to alterations in microsatellite sequences, causing MSI. Butyrate has been shown to improve efficacy of PD-1 inhibitors in both MSI-high and MSS tumours, suggesting that increased fiber intake may be beneficial for CRC patients who are resistant to PD-1 inhibitors \cite{danne2021butyrate, kang2023roseburia}.

\subsection{Cancer Modelling}
Mathematical oncology has gained traction over the past several decades, providing a theoretical framework for clinical studies \cite{rockne2019introduction}. In-silico experiments are now prevalent in the medical field, as in-vitro and in-vivo experiments are often expensive, time-consuming, or ethically challenging. In this thesis, we construct an ordinary differential equation (ODE) model describing interactions between butyrate produced by gut microbiota and colon cancer. Several CRC models have been developed that incorporate gut dynamics. The authors in \cite{johnston2007mathematical} constructed a compartmental model of CRC cell population dynamics in the colonic crypt, describing interactions between stem cells, differentiated cells and transit cells. In \cite{lo2013mathematical}, a reaction-diffusion model of colitis-associated colon cancer was formulated to study the impact of chronic inflammation of the colon mucosa on tumour suppressing genes. A model of CRC initiation in the colonic crypt was introduced in \cite{paterson2020mathematical} and a reaction-diffusion system of colon cancer cell metabolic reprogramming was studied in \cite{lee2017mathematical}. 

Our model is constructed to focus on cancer-butyrate dynamics, which primarily involves interactions with specific immune cells --- T cells, Tregs and dendritic cells, but many cancer models have been developed that include different combinations of immune elements. For example, \cite{alharbi2020new} and \cite{de2001mathematical} formulate normal and cancer growth as competition models, both including immune cells as a single component. The model in \cite{raeisi2024mathematical} includes colon cancer, dendritic, helper T and cytotoxic T cells. Additionally, \cite{mohammad2022investigating} and \cite{kirshtein2020data} constructed detailed models for breast and colon cancer respectively, each incorporating various types of immune cells (including T cells, Tregs and dendritic cells) and cytokines. In \cite{hadjigeorgiou2025mathematical}, a large ODE system was constructed that incorporates interactions between colon cancer cells, immune cells, and various families of gut microbiota. A comprehensive review of cancer-immune cell modelling is presented in \cite{eftimie2011interactions}. Butyrate was incorporated in an immunological model describing inflammation in the human gut \cite{neumann2018qualitative}, as well as in a model of the gut-bone axis \cite{islam2021mathematical}. However, to the best of our knowledge, no existing mathematical models describe the role of butyrate as a cellular metabolite in CRC.

\subsection{Incorporating Stochastic Cancer Dynamics}
Mathematical models are powerful tools for studying biological systems, but they necessarily involve simplifying assumptions. One way to address this limitation is to incorporate randomness into the equations, transforming a deterministic ODE system into a system of stochastic differential equations (SDEs). This helps to capture some of the uncertainty involved in biological processes such as cancer growth \cite{allen2010introduction}. Several existing cancer models incorporate SDEs to represent effects such as cell mutation, division, and micro-environmental interactions \cite{mao2011stationary, mansour2022stochastic, tabassum2019mathematical, azizi2026advancing, shrestha2025coupled, mazlan2016modelling}. We incorporate stochasticity in our model by adding noise to the cancer equation, and study the differences in long term solution behaviour between our ODE and SDE models.

\section{Carbohydrate Fermentation Model}\label{Gut Model}

The human colon is host to a diverse population of microorganisms, including bacteria, fungi, and archaea, that are collectively referred to as gut microbiota. These microbiota play an essential role in digestion, metabolism, and immune system support. The gut microbiome has become a popular research topic over the past few decades, with increasing evidence highlighting its connection to a wide range of diseases including inflammatory bowel diseases (IBD), irritable bowel syndrome (IBS), obesity, diabetes and cancer \cite{jandhyala2015role, kim2022potential}. The specific composition of the gut microbiome is unique to each individual, depending on factors such as genetics, diet, and antibiotic use \cite{jandhyala2015role}. Fiber fermenting bacteria, including species from the \textit{Bacterioides} and \textit{Bifidobacterium} genera in addition to others \cite{oliphant2019macronutrient}, are responsible for the fermentation of complex carbohydrates into short-chain fatty acids (SCFAs). Acetate, propionate and butyrate are the most abundant SCFAs in the gut. They are essential to human health, playing a significant role in energy metabolism, gut barrier strength and immune response modulation \cite{hou2022gut}.

The human colon is lined with a two-layer mucus membrane: an inner layer that adheres to the epithelial cells and is devoid of bacteria, and an outer, unattached layer which is colonized by microbiota \cite{johansson2011two}. The lumen of the colon is also colonized but encompasses a different microbial composition than that of the mucus layer. Glycans (dietary fiber and mucins) are the main energy sources for carbohydrate fermenting gut microbiota. The majority of dietary fibers originate from plant cell walls, obtained by the consumption of fruits, vegetables and whole grains \cite{williams2017gut}. Dietary fiber escapes digestion in the upper gastrointestinal (GI) tract and enters the lumen of the colon through the small intestine. Conversely, mucins, which are glycoproteins produced by the host's epithelial tissue, are abundant in the mucus layer. Only certain microbial species are capable of degrading fiber and mucins into simpler sugars. These sugars are further broken down by other species into SCFAs, lactate, and gas.

We present a comprehensive model of carbohydrate fermentation in the human colon, which we refer to as the ``gut model". The system we use is a modified version of the models in \cite{moorthy2015spatially, munoz2010mathematical}, consisting of 24 ODEs. We briefly describe the biological mechanisms behind the model, introduce the equations and discuss the effect of prebiotic and probiotic administration on model outputs.

\subsection{Model Description}
 The colon is modelled by a two-compartment bioreactor --- the primary compartment representing the lumen and the second representing the mucus layer. The equations describe exchange between these compartments as well as the host, and the reaction processes involved in the break down of fiber and mucins into SCFAs and byproducts by fiber-fermenting microbiota. Microbiota are categorized into three main functional groups - sugar-degraders (SDs), lactate degraders (LDs) and hydrogen-degrading acetogens (HDAs), which are each present in both the lumen and mucus compartments.

Sugar degraders break down fiber, which is supplied at a constant rate into the lumen, as well as mucins, which are produced in the mucus layer, into sugars. Sugars can diffuse between the lumen and mucus and are further broken down into hydrogen, lactate, and SCFAs by sugar degraders in both compartments. Lactate and hydrogen are also broken down to SCFAs and byproducts (carbon dioxide and water) by lactate degraders and hydrogen degrading acetogens, respectively. Each of these compounds along with the biomass can move between compartments through attachment (lumen to mucus) or sloughing (mucus to lumen). Lactate, SCFAs, water and carbon dioxide in the mucus compartment can be absorbed by the host. Diagrams of the bioreactor setup and main reaction processes are shown in Figure \ref{fig: gut diagrams}.

\begin{figure}
    \centering
    \includegraphics[width=1.0\linewidth]{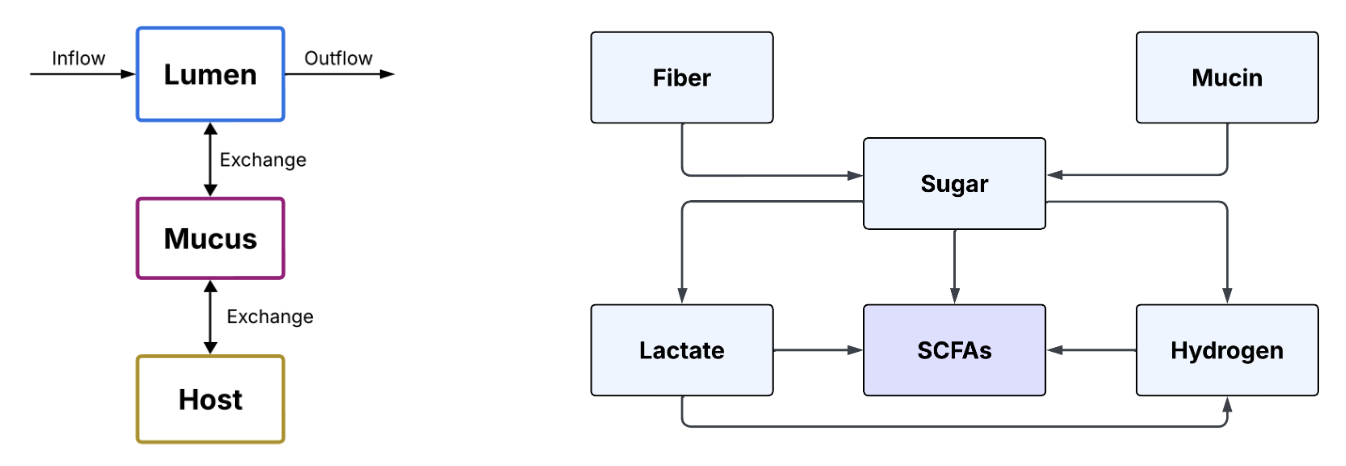}  
    \caption[Diagram of bioreactor setup and fermentation reactions]{Diagrams of the gut bioreactor (left) and reactions involved in the fermentation of fiber and mucins into SCFAs (right). This reaction pathway occurs in both the lumen and mucus compartments of the bioreactor.}
    \label{fig: gut diagrams}
\end{figure}

The dependent variables are summarized in Table \ref{tab: dependent variables} in units of g/L. Subscripts $l$ and $m$ on variables indicate their lumen and mucus concentrations respectively. Model parameters and their default values are given in Tables \ref{tab:yield coeffs}-\ref{tab: other params}. Default values of initial biomass concentrations in units of $g/L$ are given in Table \ref{tab: gut ics}. Initial concentrations of the remaining variables are all zero. Simulations were run with these parameter values unless otherwise stated.

\begin{table}[h]
    \centering
    \caption[Dependent variables in the gut model]{Dependent variables of the gut model where $i\in\{l,m\}$ denotes lumen ($l$) or mucus ($m$).}
    \vspace{0.1cm}
    \begin{tabular}{c|l}
        Symbol & Variable (g/L)\\ \hline
        $I_l$ & Fiber (lumen only)\\
        $I_m$ & Mucin (mucus only)\\
        $S_i$ & Sugar\\
        $L_i$ & Lactate\\
        $H_i$ & Hydrogen\\
        $A_i$ & Acetate\\
        $P_i$ & Propionate\\
        $B_i$ & Butyrate\\
        $C_i$ & Carbon dioxide\\
        $W_i$ & Water\\
        $X_{1,i}$ & SD biomass \\
        $X_{2,i}$ & LD biomass \\
        $X_{3,i}$ & HDA biomass
    \end{tabular}
    \label{tab: dependent variables}
\end{table}

 The dependent variables are grouped into vectors $\mathbf{c_l,c_m}\in\mathbb{R}^{12}$ where
\begin{align*}
    \mathbf{c_l}&=(I_l,S_l,L_l,H_l,A_l,P_l,B_l,C_l,W_l,X_{1,l},X_{2,l},X_{3,l})^T,\\
     \mathbf{c_m}&=(I_m,S_m,L_m,H_m,A_m,P_m,B_m,C_m,W_m,X_{1,m},X_{2,m},X_{3,m})^T.
\end{align*}
We define the reaction function $\rho:\mathbb{R}^{12}\to\mathbb{R}^7$ as
\begin{align*}
    \rho(\mathbf{c_i})=\left(g_{I}(I_i,X_{1,i}),f_{S}(S_i,X_{1,i}),f_{L}(L_i,X_{2,i}),f_{A}(A_i,X_{3,i}),\phi_1(X_{1,i}),\phi_2(X_{2,i}),\phi_3(X_{3,i})\right)^T
\end{align*}
for $i\in\{l,m\}$. The reaction processes are summarized in Table \ref{tab: rxns}. The Peterson matrix $M\in\mathbb{R}^{7\times12}$ of yield coefficients is defined by

\renewcommand{\arraystretch}{1.0}
\begin{align*}
    M=
    \begin{pmatrix}  
        -1  & Y_{SI} & 0      & 0      & 0      & 0      & 0      & 0      & 0      & 0     & 0     & 0 \\
        0   &  -1    & Y_{LS} & Y_{HS} & Y_{AS} & Y_{PS} & Y_{BS} & Y_{CS} & Y_{WS} & Y_{S} & 0     & 0 \\ 
        0   &   0    & -1     & Y_{HL} & Y_{AL} & Y_{PL} & Y_{BL} & Y_{CL} & Y_{WL} & 0     & Y_{L} & 0 \\ 
        0   &   0    & 0      & -1     & Y_{AH} & 0      & 0      & Y_{CH} & Y_{WH} & 0     & 0     & Y_{H} \\ 
        0   &   0    & 0      & 0      & 0      & 0      & 0      & 0      & 0      & -1    & 0     & 0 \\ 
        0   &   0    & 0      & 0      & 0      & 0      & 0      & 0      & 0      & 0     & -1    & 0 \\ 
        0   &   0    & 0      & 0      & 0      & 0      & 0      & 0      & 0      & 0     & 0     & -1 
    \end{pmatrix},
\end{align*}
where rows correspond to reaction processes and columns to dependent variables. The exchange matrices $E_i\in\mathbb{R}^{12\times12}$ are diagonal 
with entries $E_i(j,j)=\gamma_{i,c_j}$ for $j=1,2,...,12$ and $i=\{1,2,3,4\}$ (corresponding to sloughing, attachment, diffusion and absorption, respectively). Exchange rates $\gamma_{i,c_j}$ not listed in Table {\ref{tab: exchange rates}} are taken to be zero, in which case the dependent variable $c_j$ does not undergo the exchange process indexed by $i$.

Finally, $\mathbf{c}_\text{inf}, \mathbf{\Lambda}\in\mathbb{R}^{12}$ are vectors of inflow rates and production rates respectively, each of which has zeros in all entries except the first, since only the equation for fiber ($I_l$) has an inflow term and only mucins ($I_m$) have a source:
\begin{align*}    
\mathbf{c}_\text{inf}=I_\text{inf}\,\mathbf{e_1},
\hspace{0.6cm}
\mathbf{\Lambda}=
\begin{cases}
    \mathbf{0}  &\text{ if }I_m>I_\text{max},\\[0.5em]
    \Gamma\left(1-\frac{I_m}{I_\text{max}}\right)\mathbf{e_1} &\text{otherwise,}
\end{cases}
\end{align*}
where $\mathbf{0}$ and $\mathbf{e_1}$ denote the zero vector and the first basis vector of of $\mathbb{R}^{12}$, respectively. The equations in vector notation are then
\begin{align}
        \frac{d\mathbf{c_l}}{dt} 
    &= \underbrace{M^T\rho(\mathbf{c_l})}_{\text{reactions}}
    +\underbrace{\frac{V_m}{V_l}E_1\mathbf{c_m}}_{\text{sloughing}}-\underbrace{E_2\mathbf{c_l}}_{\text{attachment}}
    -\underbrace{\frac{1}{V_l}E_3(\mathbf{c_l}
    -\mathbf{c_m})}_{\text{diffusion}}+q(\underbrace{\mathbf{c}_\text{inf}}_{\text{inflow}}-\underbrace{\mathbf{c_l}}_{\text{outflow}}),\label{eq: gut l}\\
    \frac{d\mathbf{c_m}}{dt}
    &=\underbrace{M^T\rho(\mathbf{c_m})}_{\text{reactions}} 
    -\underbrace{E_1\mathbf{c_m}}_{\text{sloughing}}
    +\underbrace{\frac{V_l}{V_m}E_2\mathbf{c_l}}_{\text{attachment}}
    +\underbrace{\frac{1}{V_m}E_3(\mathbf{c_l}-\mathbf{c_m})}_{\text{diffusion}} 
    -\underbrace{E_4\mathbf{c_m}}_{\text{absorption}}+\underbrace{\mathbf\Lambda}_{\text{source}} \label{eq: gut m}.
\end{align}

\begin{table}[h]
    \centering
    \caption[Yield coefficients]{Yield coefficients taken from \cite{moorthy2015spatially,munoz2010mathematical}.}
    \vspace{0.1cm}
    \begin{tabular}{c|l|c}
        Symbol & Yield Coefficient & Value \\
        \hline
        $Y_{SI}$ & Sugar from carbohydrate & 1.0 $g_{su}/g_{carb}$\\
        $Y_{LS}$ & Lactate from sugar & 0.09010 $g_{la}/g_{su}$\\
        $Y_{HS}$ & Hydrogen from sugar & 0.00606 $g_{H_2}/g_{su}$\\
        $Y_{AS}$ & Acetate from sugar & 0.18916 $g_{ac}/g_{su}$\\
        $Y_{PS}$ & Propionate from sugar & 0.09877 $g_{pr}/g_{su}$\\
        $Y_{BS}$ & Butyrate from sugar & 0.13215 $g_{bu}/g_{su}$\\
        $Y_{CS}$ & Carbon dioxide from sugar & 0.13333 $g_{CO_2}/g_{su}$\\
        $Y_{WS}$ & Water from sugar & 0.12364 $g_{H_2O}/g_{su}$\\
        $Y_{HL}$ & Hydrogen from lactate & 0.00444 $g_{H_2}/g_{la}$\\
        $Y_{AL}$ & Acetate from lactate & 0.06667 $g_{ac}/g_{la}$\\
        $Y_{PL}$ & Propionate from lactate & 0.16444 $g_{pr}/g_{la}$\\
        $Y_{BL}$ & Butyrate from lactate & 0.09778 $g_{bu}/g_{la}$\\
        $Y_{CL}$ & Carbon dioxide from lactate & 0.14667 $g_{CO_2}/g_{la}$\\
        $Y_{WL}$ & Water from lactate & 0.2 $g_{H_2O}/g_{la}$\\
        $Y_{AH}$ & Acetate from hydrogen & 2.14286 $g_{ac}/g_{H_2}$\\
        $Y_{CH}$ & Carbon dioxide from hydrogen & -11.0 $g_{CO_2}/g_{H_2}$\\
        $Y_{WH}$ & Water from hydrogen & 6.42857 $g_{H_2O}/g_{H_2}$\\
        $Y_S$    & SD bacteria from sugar & 0.3424 $g_{X_1}/g_{su}$\\
        $Y_L$    & LD bacteria from lactate & 0.37667 $g_{X_2}/g_{la}$\\
        $Y_H$    & HDA bacteria from hydrogen & 4.035714 $g_{X_3}/g_{H_2}$\\
    \end{tabular}
    \label{tab:yield coeffs}
\end{table}

\begin{table}[]
    \centering
    \caption[Exchange rates]{Exchange rates taken from \cite{moorthy2015spatially,munoz2010mathematical}. Exchange rates not listed here are taken to be zero.}
    \vspace{0.1cm}
    \begin{tabular}{c|c|c}
    Symbol & Parameter & Value (d$^{-1}$)\\ \hline
    $\gamma_{1,I}$ & Mucin sloughing rate & 0.1 \\
    $\gamma_{1,X_1}$ & SD sloughing rate & 0.4\\
    $\gamma_{1,X_2}$ & LD sloughing rate & 0.4\\
    $\gamma_{1,X_3}$ & HDA sloughing rate & 0.4\\
    $\gamma_{2,L}$ & Lactate attachment rate & 0.88 \\
    $\gamma_{2,A}$ & Acetate attachment rate & 1.32\\
    $\gamma_{2,P}$ & Propionate attachment rate & 1.07 \\
    $\gamma_{2,B}$ & Butyrate attachment rate & 0.9\\
    $\gamma_{2,C}$ & Carbon dioxide attachment rate & 1.0\\
    $\gamma_{2,W}$ & Water attachment rate & 1.0\\
    $\gamma_{2,X_1}$ & SD attachment rate & 0.1\\
    $\gamma_{2,X_2}$ & LD attachment rate & 0.1\\
    $\gamma_{2,X_3}$ & HDA attachment rate & 0.1\\    
    $\gamma_{3,S}$ & Sugar diffusion rate & 3.9\\
    $\gamma_{4,L}$ & Lactate absorption rate & 12.6\\
    $\gamma_{4,A}$ & Acetate absorption rate & 18.9\\
    $\gamma_{4,P}$ & Propionate absorption rate & 15.32\\
    $\gamma_{4,B}$ & Butyrate absorption rate & 12.88\\
    $\gamma_{4,C}$ & Carbon dioxide absorption rate & 14.0\\
    $\gamma_{4,W}$ & Water absorption rate & 1.6\\
    \end{tabular}
    \label{tab: exchange rates}
\end{table}

\begin{table}[]
    \centering
    \caption[Reaction parameters]{Reaction parameters taken from \cite{moorthy2015spatially}.}
    \vspace{0.1cm}
    \begin{tabular}{c|c|c}
        Symbol & Parameter & Value \\ \hline
        $\kappa_I$ & Hydrolysis rate & 10.62 $g_I/(g_{X_{\text{su}}}\cdot d)$\\
        $\kappa_S$ & Sugar consumption rate & 12.63 $g_\text{su}/(g_{X_{\text{su}}}\cdot d)$\\
        $\kappa_L$ & Lactate consumption rate & 82.11 $g_\text{la}/(g_{X_{\text{la}}}\cdot d)$\\
        $\kappa_S$ & Hydrogen consumption rate & 1.93 $g_{H_2}/(g_{X_{H_2}}\cdot d)$\\
        $K_I$ & Hydrolysis concentration ratio & 0.27 $g_I/g_{X_{\text{su}}}$\\
        $K_S$ & Sugar half saturation constant & 0.47 $g_\text{su}/L$\\
        $K_L$ & Lactate half saturation constant & 0.60 $g_\text{la}/L$\\
        $K_H$ & Hydrogen half saturation constant & 0.0034 $g_{H_2}/L$\\
        $\kappa_1$ & SD biomass decay & 0.01 $d^{-1}$\\
        $\kappa_2$ & LD biomass decay & 0.01 $d^{-1}$\\
        $\kappa_3$ & HDA biomass decay & 0.01 $d^{-1}$\\
    \end{tabular}
    \label{tab: rxn params}
\end{table}

\begin{table}[]
    \centering
    \caption[Physical model parameters]{Physical model parameters taken from \cite{jegatheesan2024mathematical}.}
    \vspace{0.1cm}
    \begin{tabular}{c|c|c}
        Symbol & Parameter & Value \\ \hline
        $V_l$ & Volume of lumen & 0.9 L\\
        $V_m$ & Volume of mucus & 0.1 L\\
        $q$ & Average flow rate & 1.01 $d^{-1}$\\
        $I_\text{inf}$ & Inflow fiber concentration & 20 $g_I$/L\\
        $I_\text{max}$ & Maximum mucin concentration & 500 $g_I$/L\\
        $\Gamma$ & Mucin production rate & 50 $g_I$/(L$\cdot$d)\\
    \end{tabular}
    \label{tab: other params}
\end{table}

\begin{table}[]
    \centering
    \caption[Default initial biomass concentrations]{Default initial biomass concentrations in units of g/L.}
    \vspace{0.2cm}
    \begin{tabular}{c|c}
        Initial condition & Value (g/L) \\ \hline
        $X_{1,l}(0)$ & 5.5\\
        $X_{1,m}(0)$ & 5.5\\
        $X_{2,l}(0)$ & 2.5\\
        $X_{2,m}(0)$ & 2.5\\
        $X_{3,l}(0)$ & 0.1\\
        $X_{3,m}(0)$ & 0.1
    \end{tabular}
    \label{tab: gut ics}
\end{table}

\renewcommand{\arraystretch}{1.5}
\begin{table}
    \centering
    \caption[Fermentation reaction functions]{Functions describing reaction processes (units of g/(L$\cdot$d)).}
    \vspace{0.1cm}
    \begin{tabular}{c|l|l}
    Index & Rate Description & Function \\ \hline
    1 & Hydrolysis & $g_{I}(I,X_1) = \frac{\kappa_{I}I}{K_{I}X_1 + I}X_1$\\
    2 & Glucose utilization & $f_{S}(S,X_1) = \frac{\kappa_{S}S}{K_{S} + S}X_1$\\    
    3 & Lactate utilization & $f_L(L,X_2) = \frac{\kappa_{L}L}{K_L+L}X_2$ \\
    4 & Acetogenesis & $f_A(A,X_3) = \frac{\kappa_{A}H}{K_A+H}X_3$\\
    5 & $X_1$ Decay & $\phi_1(X_1)=\kappa_1X_1$\\
    6 & $X_2$ Decay & $\phi_2(X_2)=\kappa_2X_2$\\
    7 & $X_3$ Decay & $\phi_3(X_3)=\kappa_3X_3$\\
\end{tabular}
    \label{tab: rxns}
\end{table}

\subsection{Simulation Results}\label{sect: gut results}
Solution components representing SCFA and biomass concentrations in the lumen and mucus are plotted over time in Figure \ref{fig: gut time series1}. At steady state, acetate is the most abundant SCFA in both compartments, which is consistent with relative SCFA levels typically found in healthy patients \cite{cummings1987short, fernandes2014adiposity}. All bacterial functional groups survive, with sugar degraders being the most abundant. The difference in biomass concentrations between the lumen and mucus layers can be attributed to the relative volumes of the two compartments. Since $V_l\gg V_m$, the attachment terms appearing in Equations {(\ref{eq: gut l})} and {(\ref{eq: gut m})} are generally greater in magnitude than the sloughing terms, resulting in more biomass moving to the mucus compartment over time. Figure \ref{fig: gut only vary Iinf} shows the effect of the rate of fiber inflow ($I_\text{inf}$) on steady state values of total biomass and SCFA concentrations. We see that higher values of $I_\text{inf}$ increases each of these concentrations, indicating that in our model, fiber is crucial for gut bacteria and SCFA production, as is known from biological literature \cite{barber2020health}.

\begin{figure}
    \centering
    \includegraphics[width=1.0\linewidth]{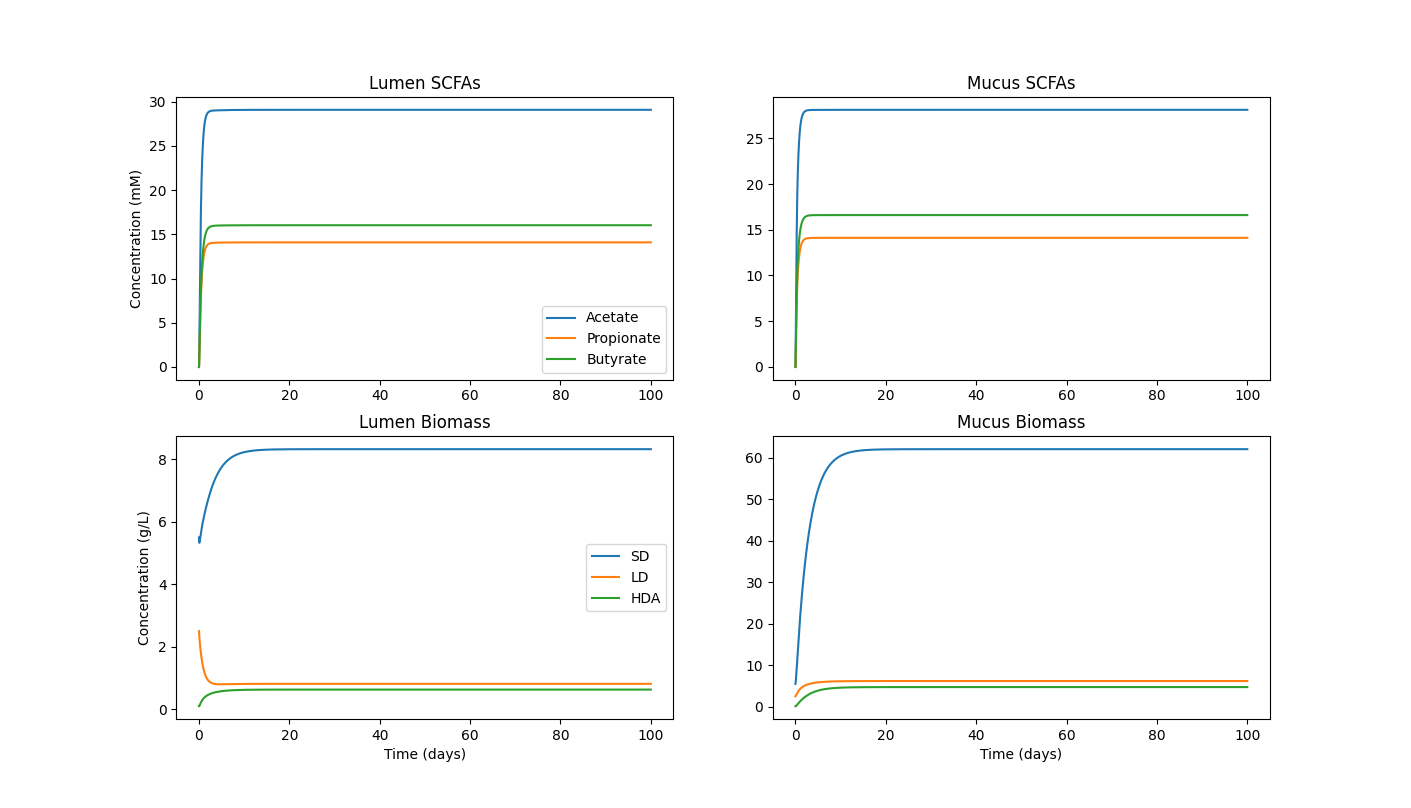}
    \caption[Time series of SCFA and biomass concentrations]{Solution components to the gut model representing SCFA and biomass concentrations over time until steady state in the lumen and mucus compartments. SD, LD and HDA indicate sugar degraders, lactate degraders and hydrogen degrading acetogens, respectively.}
    \label{fig: gut time series1}
\end{figure}

\begin{figure}
    \centering
    \includegraphics[width=1.0\linewidth]{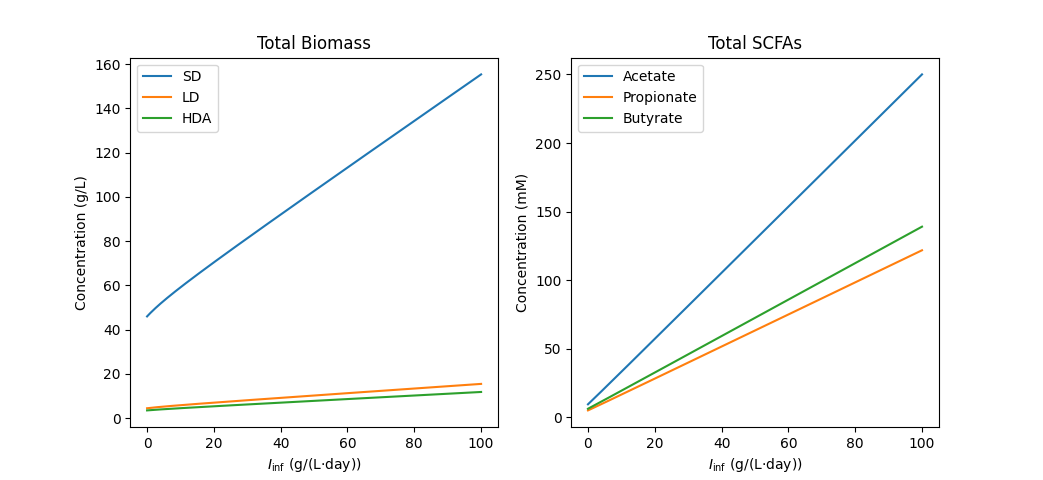}
    \caption[Effect of fiber on SCFAs and biomass]{Effect of varying fiber inflow ($I_\text{inf}$) between 0 and 100 g/L on steady state values of total (lumen + mucus) biomass and SCFA concentrations.}
    \label{fig: gut only vary Iinf}
\end{figure}

We then looked at the effect of initial biomass values on SCFA production. Figure \ref{fig: biomass bifurcation} suggests that the system with the given parameter values is bistable, since as the initial condition of total sugar degraders ($X_{1,l}(0)+X_{1,m}(0)$) is varied, it moves from the basin attraction of a washout equilibrium to that of an equilibrium allowing for bacterial persistence. Without intervention (solid lines), if the initial concentration of sugar degraders exceeds approximately 0.020 g/L, all three functional groups will persist. Otherwise, they will die out. We also find that SCFAs directly rely on biomass concentrations, with their bifurcation between extinction and persistence occurring at the same point as that for the biomass.

The dotted lines show model solutions when a sugar-degrading probiotic is added to the system at a constant dose of 0.035 g/(L$\cdot$day). Probiotic administration expands the basin of attraction of the biomass-existing equilibrium, allowing for persistence of biomass at lower initial concentrations of sugar degraders. However, if initial conditions of sugar degraders exceed approximately 0.020 g/L, then probiotics have no observable effect on SCFA production. This suggests that probiotics may be beneficial for patients with a compromised gut microbiome lacking sugar degraders, but have no significant effect for patients with sufficient sugar degrading microbiota.

\begin{figure}
    \centering
    \includegraphics[width=1.0\linewidth]{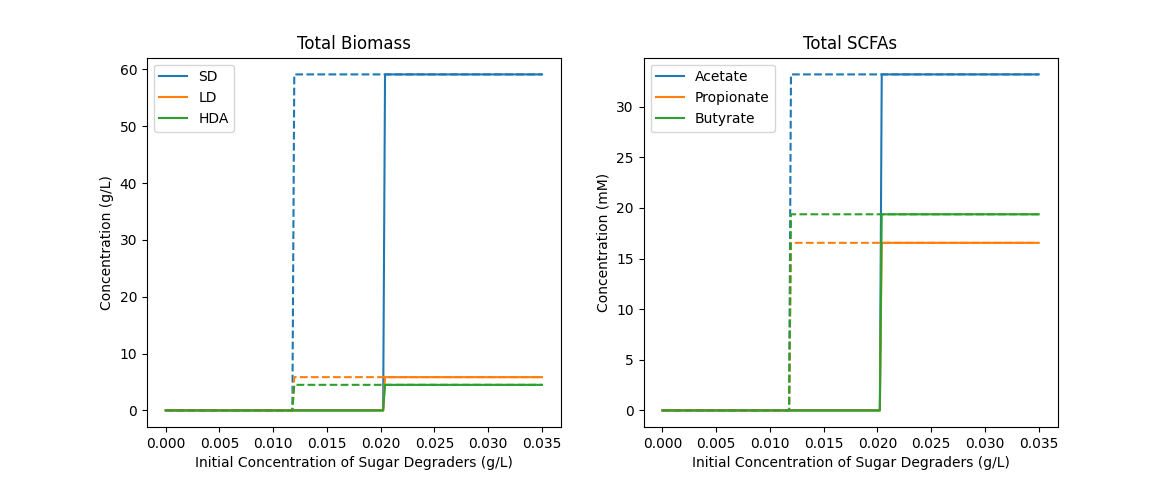}
    \caption[Effect of initial microbial composition on steady state biomass concentrations with and without probiotics]{Effect of varying the initial total concentration of sugar degraders ($X_{1,l}(0)+X_{1,m}(0)$) up to 0.035 g/L on steady states of total biomass (left) and SCFAs (right). Solid lines show model solutions without probiotics whereas dotted lines show the effect of adding probiotics at a constant rate of 0.035 g/(L$\cdot$day).}
    \label{fig: biomass bifurcation}
\end{figure}

\section{Cancer Model Development}

We now introduce our model describing cancer growth in the presence of immune cells and butyrate produced from gut microbiota. The cancer model is constructed so that it can be studied in isolation, or weakly coupled to the gut model introduced in Section \ref{Gut Model}. We begin by outlining the biological assumptions that went into building our model, describe the model equations, prove the existence and uniqueness of solutions, and discuss the analysis of steady states.

\subsection{Model Assumptions}\label{sect: assumptions}
The model is formulated for the microenvironment of a cancerous colon tumour, located along the mucus layer of the colon. The tumour is exposed to various materials in the gut, most notably products from carbohydrate fermentation such as butyrate and other SCFAs. 

We model the dynamics of healthy colonocytes or `normal' cells as well as colon cancer cells. Three key types of immune cells are incorporated in our model, namely cytotoxic T-cells, regulatory T-cells and dendritic cells. Additionally, butyrate produced from carbohydrate fermentation is included. We assume that the concentrations of model constituents are spatially homogeneous, meaning they change only in time and can be modelled by a system of ODEs. The assumptions that drive the model equations are presented below:
\begin{enumerate}
\renewcommand{\labelenumi}{\textbf{A\arabic{enumi}}}
\renewcommand{\theenumi}{A\arabic{enumi}}
    \item \textbf{(Cell growth)} 
    Several functions for modelling cancer growth have been proposed, each with their own advantages in different scenarios \cite{eftimie2011interactions, tabassum2019mathematical}. We choose to model both cancer and normal cell growth logistically, as done in many other cancer-immune models \cite{alharbi2020new, de2001mathematical, raeisi2024mathematical, mohammad2022investigating, kirshtein2020data, de2005validated, dritschel2018mathematical}, as this balances simplicity while still capturing key non-linear dynamics. \label{cell growth}
    \item \textbf{(Uptake)} Butyrate is the primary energy source for colonocytes \cite{donohoe2011microbiome, comalada2006effects}. Butyrate is also absorbed by T cells and Tregs, promoting their growth \cite{luu2021microbial, zhang2025gut, kazmierczak2022sodium, furusawa2013commensal}. \label{B stim}
    \item \textbf{(Competition)} Normal colonocytes compete with cancer for nutrients and space. Noncancerous cells undergo contact inhibition, causing a cease in proliferation and growth when they contact each other to avoid overcrowding. This characteristic is lost in cancer cells \cite{pavel2018contact}, meaning they are not significantly affected by the presence of normal cells.     
    \label{compete}
    \item \textbf{(Inhibition)} In contrast to its positive effect on normal host cells, butyrate inhibits cancer cell growth \cite{li2018butyrate}.\label{B inhib}
    \item \textbf{(Immune activity)} Cytotoxic T cells attack cancer cells by releasing molecules that induce apoptosis \cite{ahmed2023role}. However, excessive T-cell accumulation causes inflammation so they are monitored and down regulated by Tregs \cite{ohue2019regulatory}. Additionally, cancer cells can hinder the antitumour properties of dendritic cells by disrupting their transport and preventing maturation \cite{lin2025new}.\label{inhibitory effects}
    \item \textbf{(Recruitment)} The presence of cancer cells stimulates the recruitment of dendritic cells and Tregs to the tumour microenvironment (TME). Dendritic cells then process and present tumour antigens to the T-cells, promoting their recruitment to the TME \cite{wang2024complex}.\label{recruit}
    \item \textbf{(Death)} Colonocytes, T-cells, Tregs, and dendritic cells die from natural causes, modelled by linear decay as in \cite{de2001mathematical, raeisi2024mathematical, de2005validated}. The natural death rate of cancer cells is assumed to be negligible in comparison.\label{death}
    \item \textbf{(Diffusion)} Butyrate in the mucus layer that is produced during carbohydrate fermentation diffuses into the TME \cite{kazmierczak2022sodium}\label{Binf}.
\end{enumerate}

\subsection{Model Equations}
We now formulate our model based on the above biological assumptions. The state variables are all functions of time $t\geq0$, measured in days, and are summarized in Table \ref{tab: variables}.\\
\begin{table}[h]
    \centering
    \caption[Dependent variables in the cancer model]{Summary of dependent variables in the cancer model.}\label{tab: variables}
    \vspace{0.2cm}
    \begin{tabular}{c|l|c}
        Symbol & Description & Units\\ \hline
        $N$ & Normal colonocytes & cells/mm$^3$ \\
        $C$ & Cancer cells & cells/mm$^3$ \\
        $T$ & Cytotoxic T cells & cells/mm$^3$ \\
        $R$ & Regulatory T cells (Tregs) & cells/mm$^3$ \\
        $D$ & Dendritic cells & cells/mm$^3$ \\
        $B$ & Butyrate & mM
    \end{tabular}
\end{table}

\noindent\textbf{Normal Cells:}\\
As per \ref{cell growth}, normal cells grow logistically with constant growth rate $r_N$ and carrying capacity $k_N$. By \ref{B stim}, growth is enhanced by the presence of butyrate, modelled by a monotonically increasing sigmoidal function of $B$ with half saturation constant $\kappa_B$:
\begin{align*}
    g(B)=\frac{B}{B+\kappa_B}.
\end{align*}
Normal cells decay at a rate proportional to the number of cancer cells as in \cite{alharbi2020new, de2001mathematical, raeisi2024mathematical}, due to competition for space and other resources (\ref{compete}). The cells die from natural causes at rate $\alpha_N$ (\ref{death}). The equation for $N$ is then 
\begin{align*}
    N' &= \underbrace{r_NN\left(1-\frac{N}{k_N}\right)}_{\ref{cell growth}}\,\underbrace{g(B)}_{\ref{B stim}}-\underbrace{\delta_NNC}_{\ref{compete}}-\underbrace{\alpha_NN}_{\ref{death}}.
\end{align*}

\noindent\textbf{Cancer Cells:}\\
Cancer cells also grow logistically, by \ref{cell growth}, with growth rate $r_C$ and carrying capacity $k_C$. As opposed to normal cells, cancer cell growth is inhibited by the presence of butyrate, as per assumption \ref{B inhib}. This is modelled by a $B$-dependent monotonically decreasing sigmoidal function,
\begin{align*}
    I(B)=\frac{\kappa_B}{B+\kappa_B}.
\end{align*}
By assumption \ref{inhibitory effects}, cancer cells die proportional to T cell density at rate $\delta_C$, as in \cite{alharbi2020new, de2001mathematical, dritschel2018mathematical, raeisi2024mathematical}, giving the equation for $C$ as
\begin{align*}
    C' &= \underbrace{r_CC\left(1-\frac{C}{k_C}\right)}_{\ref{cell growth}}\,\underbrace{I(B)}_{\ref{B inhib}}-\underbrace{\delta_C\,CT}_{\ref{inhibitory effects}}.
\end{align*}

\noindent\textbf{T Cells:}\\
T cell proliferation is modelled linearly in $T$ with rate $r_T$ and is enhanced by butyrate (\ref{B stim}). T cell recruitment is directly influenced by the number of dendritic cells in the TME (\ref{recruit}) and, similar to \cite{raeisi2024mathematical}, is modelled as an inflow proportional to $D$ at rate $\varphi_T$. By \ref{inhibitory effects}, T cells are down-regulated proportional to Treg density at rate $\delta_T$. T cells die from natural causes at rate $\alpha_T$, by \ref{death}, giving their equation as
\begin{align*}
    T' &= \underbrace{{r_T\,g(B)}\,T}_{\ref{B stim}}+\underbrace{\varphi_TD}_{\ref{recruit}}-\underbrace{\delta_T\,TR}_{\ref{inhibitory effects}}-\underbrace{\alpha_T\,T}_{\ref{death}}.
\end{align*}

\noindent\textbf{Regulatory T Cells:}\\
By assumption \ref{B stim}, Treg proliferation is also enhanced by butyrate and is modelled analogously to that of T cells. By \ref{recruit}, Tregs are recruited to the TME at rate $\varphi_R$, proportional to the cancer cell density and die at rate $\alpha_R$ (\ref{death}):
\begin{align*}
    R' &= \underbrace{r_R\,g(B)\,R}_{\ref{B stim}}+\underbrace{\varphi_R\,C}_{\ref{recruit}}-\underbrace{\alpha_RR}_{\ref{death}}.
\end{align*}

\noindent\textbf{Dendritic Cells:}\\
Like Tregs, dendritic cells are recruited to the TME by cancer cells (\ref{recruit}) at rate $\varphi_D$ but decay proportionally to cancer cell density (\ref{inhibitory effects}) at rate $\delta_D$ and die at rate $\alpha_D$ (\ref{death}). This is similar to the DC equation in \cite{raeisi2024mathematical}:
\begin{align*}
    D' &= \underbrace{\varphi_D\,C}_{\ref{recruit}}-\underbrace{\delta_D\,DC}_{\ref{inhibitory effects}}-\underbrace{\alpha_DD}_{\ref{death}}.
\end{align*}

\noindent\textbf{Butyrate:}\\
The mass transfer of butyrate occurs at rate $q_B$ and diffuses into the TME depending on the concentration gradient between the TME and its surroundings (\ref{Binf}). When coupled to the gut model, $B_\text{inf}$ is the solution component representing butyrate concentration in the mucus layer produced during fermentation ($B_\text{inf}(t)=B_m(t)$). In the uncoupled case, $B_\text{inf}$ is taken to be constant. Normal, T, and regulatory T cells uptake butyrate from the TME at rates $u_N$, $u_T$ and $u_R$ respectively (\ref{B stim}):
\begin{align*}
    B' &= \underbrace{q_B(B_{\text{inf}}-B)}_{\ref{Binf}}-\underbrace{g(B)\,(u_N\,N+u_T\,T+u_R\,R)}_{\ref{B stim}}.
\end{align*}

The full cancer model is then given by 
     \begin{align}
        N' &= \underbrace{r_NN\left(1-\frac{N}{k_N}\right)}_{\ref{cell growth}: \text{ Cell growth}}\,\underbrace{\frac{B}{B+\kappa_B}}_{\ref{B stim}\text{: Uptake}}-\underbrace{\delta_NNC}_{\ref{compete}: \text{ Competition} }-\underbrace{\alpha_NN}_{\ref{death}: \text{ Death}},\label{eq: N}\\
        C' &= \underbrace{r_CC\left(1-\frac{C}{k_C}\right)}_{\ref{cell growth}: \text{ Cell growth}}\,\underbrace{\frac{\kappa_B}{B+\kappa_B}}_{\ref{B inhib}:\text{ Inhibition}}-\underbrace{\delta_C\,CT}_{\ref{inhibitory effects}: \text{ Immune activity}},\label{eq: C}\\
        T' &= \underbrace{{r_T\,\frac{B}{B+\kappa_B}}\,T}_{\ref{B stim}:\text{ Uptake}}+\underbrace{\varphi_TD}_{\ref{recruit}:\text{ Recruitment}}-\underbrace{\delta_T\,TR}_{\ref{inhibitory effects}:\text{ Immune activity}}-\underbrace{\alpha_T\,T}_{\ref{death}:\text{ Death}},\\
        R' &= \underbrace{r_R\,\frac{B}{B+\kappa_B}\,R}_{\ref{B stim}:\text{ Uptake}}+\underbrace{\varphi_R\,C}_{\ref{recruit}: \text{ Recruitment}}-\underbrace{\alpha_RR}_{\ref{death}:\text{ Death}},\\
        D' &= \underbrace{\varphi_D\,C}_{\ref{recruit}:\text{ Recruitment}}-\underbrace{\delta_D\,DC}_{\ref{inhibitory effects}:\text{ Immune activity}}-\underbrace{\alpha_DD}_{\ref{death}:\text{ Death}},\\
         B' &= \underbrace{q_B(B_{\text{inf}}-B)}_{\ref{Binf}:\text{ Diffusion}}-\underbrace{\frac{B}{B+\kappa_B}\,(u_N\,N+u_T\,T+u_R\,R)}_{\ref{B stim}:\text{ Uptake}}\label{eq: B}.
    \end{align} 
subject to initial data
\begin{align}
    (N(0),C(0),T(0),R(0),D(0),B(0))=(N_0,C_0,T_0,R_0,D_0,B_0)\in\mathbb{R}_{+,0}^6\label{initial data}.
\end{align}

    
\subsection{Existence and Uniqueness}

\begin{theorem} \label{thm: det existence}
For any $T>0$, the initial value problem of (\ref{eq: N})--(\ref{eq: B}) with initial data (\ref{initial data}) has a unique solution up to time $T$ that remains in the non-negative cone $\mathbb{R}_{+,0}^6$ and depends continuously on initial conditions and model parameters. Furthermore, the solution is bounded above in $\mathbb{R}_{+,0}^6$ with constant bounds on $N(t)$, $C(t)$, $D(t)$ and $B(t)$ given by
    \begin{align*}
         &N(t)\leq\max\{k_N,N_0\}, \hspace{0.5cm} C(t)\leq\max\{k_C,C_0\}:=M_C,\\
        &D(t)\leq\max\{\varphi_D/\delta_D,D_0\}, \hspace{0.5cm} B(t)\leq\max\{B_{\text{inf}},B_0\},
    \end{align*}
    and time-dependent bounds on $T(t)$ and $R(t)$ given by
    \begin{align*}
        T(t)\leq 
    \begin{cases}
        \max\bigg\{\frac{\varphi_TM_D}{\alpha_T-r_T},T_0\bigg\}\hspace{0.2cm}\text{if}\hspace{0.2cm}r_T<\alpha_T,\\        \varphi_TM_D\,t+T_0\hspace{0.2cm}\text{if}\hspace{0.2cm}r_T=\alpha_T,\\
        \left(T_0+\frac{\varphi_T M_D}{r_T-\alpha_T}\right)e^{(r_T-\alpha_T)t}\hspace{0.2cm}\text{if}\hspace{0.2cm}r_T>\alpha_T,
    \end{cases}
    \end{align*}
    and
    \begin{align*}
    R(t)\leq 
    \begin{cases}
        \max\bigg\{\frac{\varphi_rk_C}{\alpha_R-r_R},R_0\bigg\}\hspace{0.2cm}\text{if}\hspace{0.2cm}r_R<\alpha_R,\\        \varphi_Rk_C\,t+R_0\hspace{0.2cm}\text{if}\hspace{0.2cm}r_R=\alpha_R,\\
        \left(R_0+\frac{\varphi_R k_C}{r_R-\alpha_R}\right)e^{(r_R-\alpha_R)t}\hspace{0.2cm}\text{if}\hspace{0.2cm}r_R>\alpha_R,
    \end{cases}
    \end{align*}
    where $M_D=\varphi_D\,M_C$ is a positive constant.
\end{theorem}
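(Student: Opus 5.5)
We follow the standard route: local existence and uniqueness, then invariance of the non-negative cone, then a priori bounds that rule out finite-time blow-up. Throughout, $B_{\text{inf}}$ is treated as a non-negative constant; in the coupled case it is a bounded continuous function, and $B_{\text{inf}}$ in the bound is replaced by its supremum.

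\textbf{Step 1 (local existence, uniqueness, continuous dependence).} The right-hand side of (\ref{eq: N})--(\ref{eq: B}) is polynomial in $(N,C,T,R,D)$ multiplied by the rational functions $B/(B+\kappa_B)$ and $\kappa_B/(B+\kappa_B)$. These are smooth on $B>-\kappa_B$, so the vector field is locally Lipschitz on an open neighbourhood of $\mathbb{R}_{+,0}^6$. The Picard--Lindel\"of theorem then gives a unique local solution. The standard theorem on smooth dependence of ODE solutions gives continuous dependence on initial data and parameters on any compact interval of existence.

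\textbf{Step 2 (non-negativity).} We check quasi-positivity: each component's derivative is non-negative whenever that component vanishes and the others are non-negative.
\begin{itemize}
\item On $N=0$ we get $N'=0$, and on $C=0$ we get $C'=0$.
\item On $T=0$, $T'=\varphi_T D\ge0$; on $R=0$, $R'=\varphi_R C\ge0$; on $D=0$, $D'=\varphi_D C\ge0$.
\item On $B=0$, $B'=q_BB_{\text{inf}}\ge0$.
\end{itemize}
The equations for $N$ and $C$ have the form $N'=N\cdot h(\cdot)$, so $N(t)=N_0\exp(\int h)$ remains non-negative, and similarly for $C$. For the remaining components we invoke the standard invariance result for quasi-positive locally Lipschitz fields.

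\textbf{Step 3 (constant bounds, by comparison).}
\begin{itemize}
\item For $N$: since $g\le1$ and the negative terms are non-positive, $N'\le r_NN(1-N/k_N)g(B)$. Wherever $N>k_N$ we have $N'\le0$, so $N\le\max\{k_N,N_0\}$.
\item For $C$: the same argument with $I(B)\le1$ gives $C\le M_C$.
\item For $D$: $D'\le C(\varphi_D-\delta_D D)$, which is $\le0$ whenever $D>\varphi_D/\delta_D$, giving the stated bound.
\item For $B$: $B'\le q_B(B_{\text{inf}}-B)$, so $B\le\max\{B_{\text{inf}},B_0\}$.
\end{itemize}

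\textbf{Step 4 (bounds on $T$ and $R$, by linear comparison).} Since $g\le1$ and $\delta_T TR\ge0$, we get $T'\le(r_T-\alpha_T)T+\varphi_T M_D$, where $M_D$ bounds $D$. Solving the linear comparison ODE exactly yields the three cases of the statement. For $r_T>\alpha_T$, the solution $(T_0+a/\lambda)e^{\lambda t}-a/\lambda$ is dominated by $(T_0+a/\lambda)e^{\lambda t}$, with $\lambda=r_T-\alpha_T$ and $a=\varphi_TM_D$. The same argument with $R'\le(r_R-\alpha_R)R+\varphi_R M_C$ gives the bounds for $R$.

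\textbf{Step 5 (global existence).} All components are bounded on every finite interval, so the maximal solution cannot blow up before any prescribed time $T$ (the statement uses $T$ for both the final time and the T-cell variable).

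\textbf{Main obstacle: matching the constants in the statement.} The stated $D$ bound is $\max\{\varphi_D/\delta_D,D_0\}$, yet the $T$ bound uses $M_D=\varphi_D M_C$. The cleanest fix is to use the actual bound on $D$. Alternatively, $D'\le\varphi_D M_C-\alpha_D D$ gives $D\le\max\{\varphi_DM_C/\alpha_D,D_0\}$. Either choice is a valid constant for Step 4. Likewise, the $R$ bound is written with $k_C$, but $M_C$ should be used when $C_0>k_C$. I would state the bounds with these corrected constants.
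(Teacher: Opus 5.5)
Your proposal follows the paper's proof essentially step for step: quasi-positivity on the faces $x_i=0$ gives invariance of $\mathbb{R}_{+,0}^6$, Picard--Lindel\"{o}f gives local well-posedness, first-crossing/comparison arguments give the constant bounds on $N,C,D,B$, a linear comparison ODE handles $T$ and $R$, and continuation via the a priori bounds gives global existence.

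Your correction of the constants is justified, not cosmetic. The paper's proof silently uses $D\le M_D=\varphi_D M_C$, which is not an upper bound for $D$. Dimensionally it is a rate, not a density. Concretely, with the default parameters $D$ relaxes toward $\varphi_D C/(\delta_D C+\alpha_D)$, which is already $250>\varphi_D M_C=100$ at $C=10^4$. Likewise the $R$ bound needs $M_C$ rather than $k_C$ when $C_0>k_C$. So the $T$ and $R$ bounds should be stated as you propose: with $M_D=\max\{\varphi_D/\delta_D,D_0\}$ (or $\max\{\varphi_D M_C/\alpha_D,D_0\}$), and with $M_C$ in place of $k_C$.
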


\begin{proof}
    Let $x(t)=(N(t),C(t),T(t),R(t),D(t),B(t))$. Given $x(0)\in\mathbb{R}_{+,0}^6$, $x(t)$ remains in $\mathbb{R}_{+,0}^6$ for all $t>0$. This follows from the non-negativity of $x_i'(t)$ along $x_i(t)=0$ for each $i\in\{1,2,..,6\}$, showing positive invariance of $\mathbb{R}_{+,0}^6$ under the flow of the system. The system (\ref{eq: N})--(\ref{eq: B}) satisfies a local Lipschitz condition in the non-negative cone $\mathbb{R}_{+,0}^6$ and by the Picard-Lindelof theorem, this proves the local existence and uniqueness of a solution that depends continuously on initial conditions and parameters. 

    To prove the bounds, we show arguments for $N$ and $T$ with the others following similarly. For $N$, first assume $N_0\in(0,k_N)$ and suppose $N(\tilde t)>k_N$ for some $\tilde t>0$. Then there is some $0<t^*<\tilde t$ such that $N(t^*)=k_N$ and $N'(t^*)\geq0$. But plugging in $N(t^*)=k_N$ into (\ref{eq: N}) gives $N'(t^*)=-\delta_Nk_NC(t^*)-\alpha_Nk_N<0$, contradicting the supposition. If $N_0>k_N$ then for any $t>0$ satisfying $N(t)>k_N$, we have $N'(t)<0$. Therefore, $N(t)\leq\max\{k_N,N_0\}$. The bounds on $C$, $D$ and $B$ are shown similarly.

    For $T$, the $r_T<\alpha_T$ case follows from similar arguments for $N$. The remaining cases rely on the ODE comparison principle. If $r_T=\alpha_T$ then $T'(t)\leq\varphi_TM_D$ so $T(t)\leq\varphi_TM_D\,t+T_0$. If $r_T>\alpha_T$ then $ T'(t)\leq(r_T-\alpha_T)\,T(t)+\varphi_TM_D$. Let $u(t)$ solve
\begin{align*}
    u'(t)=(r_T-\alpha_T)u(t)+\varphi_TM_D,\hspace{0.5cm}u(0)=T_0
\end{align*}
and set $v(t)=u(t)+\frac{\varphi_TM_D}{r_T-\alpha_T}$. Then $v(t)>u(t)$ and 
\begin{align*}
    v'(t)=u'(t)&=(r_T-\alpha_T)u(t)+\varphi_TM_D\\
    &=(r_T-\alpha_T)\left(v(t)-\frac{\varphi_TM_D}{r_T-\alpha_T}\right)+\varphi_TM_D\\
    &=(r_T-\alpha_T)v(t)
\end{align*}
so 
\begin{align*}
    v(t)=v(0)e^{(r_T-\alpha_T)t}=\left(u(0)+\frac{\varphi_TM_D}{r_T-\alpha_T}\right)e^{(r_T-\alpha_T)t}
\end{align*}
and $T(t)\leq u(t)<v(t)$ so we have
\begin{align*}
    T(t)\leq\left(T_0+\frac{\varphi_TM_D}{r_T-\alpha_T}\right)e^{(r_T-\alpha_T)t}.
\end{align*}
The bound on $R$ is shown similarly. Global well posedness then follows from the local result along with boundedness in $\mathbb{R}^6_{+,0}$.
\end{proof}

\begin{table}
\caption[Default cancer model parameters]{Default parameter values used in numerical simulations. Parameters without references are newly introduced.}\label{tab: parameters}
\vspace{0.2cm}
\begin{tabular}{c|l|l|l|c}
Symbol & Description & Value & Units & Reference \\
\hline
$r_N$ & Growth rate of colonocytes & 0.3 & d$^{-1}$ & \cite{reynolds2014canonical}\\
$r_C$ & Growth rate of cancer cells & 0.05 & d$^{-1}$ & \cite{raeisi2024mathematical, burke2020tumour, balci2026quantifying}  \\
$r_T$ & Proliferation rate of T cells & 0.02 & d$^{-1}$ & \cite{balci2026quantifying}\\
$r_R$ & Proliferation rate of Tregs & 0.02 & d$^{-1}$ & -\\
$k_N$ & Normal cell carrying capacity & $5.0\times10^4$ & cells\,mm$^{-3}$ & -\\
$k_C$ & Cancer cell carrying capacity & $1.0\times10^5$ & cells\,mm$^{-3}$ & \cite{eftimie2011interactions, balci2026quantifying} \\
$\kappa_B$ & Butyrate half saturation constant & 1.0 & mM & \cite{han2018butyrate}\\
$\delta_N$ & Competition coefficient
& $1.0\times10^{-6}$ & (d\,cells)$^{-1}$\,mm$^{3}$ & -\\
$\delta_C$ & Effect of T cells on cancer cells
& $5.0\times10^{-6}$ & (d\,cells)$^{-1}$\,mm$^{3}$ & - \\
$\delta_T$ & Effect of Tregs on T cells 
& $1.0\times10^{-5}$ & (d\,cells)$^{-1}$\,mm$^{3}$& - \\
$\delta_D$ & Cancer cell effect on DCs 
& $1.0\times10^{-6}$ & (d\,cells)$^{-1}$\,mm$^{3}$ & \cite{raeisi2024mathematical}\\
$\alpha_N$ & Death rate of colonocytes & 0.1 & d$^{-1}$ & \\
$\alpha_T$ & Death rate of T cells & 0.03 & d$^{-1}$ & \cite{raeisi2024mathematical, wang2023mathematical}\\
$\alpha_R$ & Death rate of Tregs & 0.05 & d$^{-1}$ & -\\
$\alpha_D$ & Death rate of DCs & 0.03 & d$^{-1}$ & \cite{raeisi2024mathematical}\\
$\phi_T$ & T cell inflow rate & 0.8 & d$^{-1}$ & - \\
$\phi_R$ & Treg inflow rate & $1.0\times10^{-3}$ & d$^{-1}$ & - \\
$\phi_D$ & Dendritic cell inflow rate & $1.0\times10^{-3}$ & d$^{-1}$ & \cite{raeisi2024mathematical}\\
$q_B$ & Butyrate mass transfer coefficient & 0.01 & d$^{-1}$ & -\\
$B_{\text{inf}}$ & Inflow concentration of butyrate & [0-50] & mM & \cite{lecona2008kinetic}\\
$u_N$ & Colonocyte butyrate uptake rate
& $5\times10^{-6}$ & mM\,(d\,cells)$^{-1}$\,mm$^{3}$ & \cite{lecona2008kinetic}\\
$u_T$ & T cell butyrate uptake rate
& $5\times10^{-6}$ & mM\,(d\,cells)$^{-1}$\,mm$^{3}$  & -\\
$u_R$ & Treg butyrate uptake rate
& $5\times10^{-6}$ & mM\,(d\,cells)$^{-1}$\,mm$^{3}$  & -\\
\end{tabular}
\end{table}

\begin{table}[]
    \centering
    \caption[Default initial conditions]{Default initial conditions used in numerical simulations.}
    \vspace{0.2cm}
    \begin{tabular}{c|c|c|c}
        Symbol & Description & Value & Units\\ \hline
        $N_0$ & Initial density of normal cells & $k_N$ & cells/mm$^3$\\
        $C_0$ & Initial density of cancer cells & 10,000 & cells/mm$^3$\\
        $T_0$ & Initial density of T cells & 0 & cells/mm$^3$\\
        $R_0$ & Initial density of Tregs & 0 & cells/mm$^3$\\
        $D_0$ & Initial density of dendritic cells & 0 & cells/mm$^3$\\
        $B_0$ & Initial butyrate concentration & 0 & mM\\
    \end{tabular}
    \label{tab: ics}
\end{table}

\subsection{Sensitivity Analysis}\label{sec: SA}
Sensitivity analysis is a standard method for exploring the parameter space of ODE models. We perform both local and global sensitivity analyses for the cancer model, to investigate the effect of each parameter on the density of cancer cells at one year.

In local sensitivity analysis, sensitivity coefficients describe the effect on model outputs of varying parameters individually by some small amount. We consider the cancer cell density at a fixed time point (one year) as our output and treat the model parameters as inputs. If we let $X_i$, for $i\in\{1,2,...,20\}$, represent the $i^\text{th}$ model parameter with default value $X^*_i$, then we calculate the (normalized) sensitivity coefficient $s_i$ for cancer cell density at one year $C(t^*,X_i)$, where $t^*=365$ days, as
\begin{align*}
    s_i=\frac{X_i^*}{C(t^*,X_i^*)}\cdot\frac{\partial C(t^*,X_i)}{\partial X_i}\bigg|_{X_i=X_i^*}.
\end{align*}
The sensitivity coefficients for parameters at their default values, calculated via forward finite difference approximations with a step size of $(3\times10^{-4})\cdot X_i^*$, are shown in Figure \ref{fig: local SA} for different default values of $B_\text{inf}$. 
\begin{figure}
    \centering
    \includegraphics[width=0.49\linewidth]{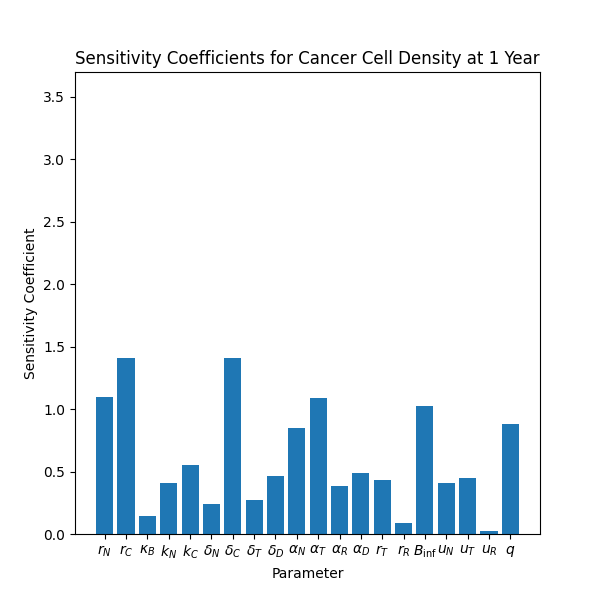}
    \includegraphics[width=0.49\linewidth]{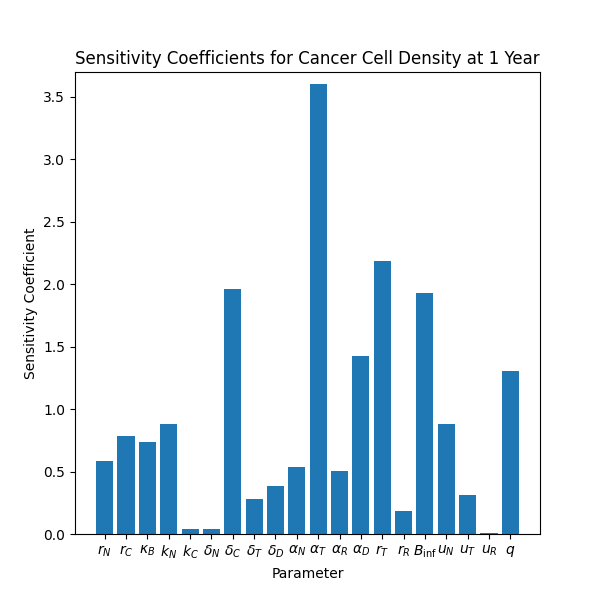}
    \caption[Sensitivity coefficients for model parameters.]{Normalized sensitivity coefficients for model parameters (at their default values) on cancer cell density at one year with $B_\text{inf}=10$ mM (left) and $B_\text{inf}=30$ mM (right).}
    \label{fig: local SA}
\end{figure}

Local sensitivity analysis is only informative for model behaviour near a single parameter set and may not capture interactions between parameters. This motivates our use of global sensitivity analysis, which measures the impact of varying parameters within arbitrarily wide ranges. We use Sobol sensitivity analysis (SSA), which is a variance-based form of global sensitivity analysis. SSA provides sensitivity indices for each model input that measure the effect of changing the input on the variance of the output. First order indices indicate the individual effect of varying a single input, whereas total order indices include interactions with other input parameters. For a given parameter $X_i$, the corresponding first and total order Sobol indices are given by
\begin{align*}
    S_i=\frac{\text{Var}(E[Y|X_i])}{\text{Var(Y)}} \hspace{0.5cm}\text{and}\hspace{0.5cm}S_i^T=1-\frac{\text{Var}(E[Y|X_{\sim i}])}{\text{Var(Y)}},
\end{align*}
respectively, where $Y$ is the density of cancer cells at one year and $X_{\sim i}$ indicates the set of all model parameters except for $X_i$. Model outputs are more sensitive to inputs with higher sensitivity indices. We used the Python package {\tt SALib} with the Saltelli sampling scheme \cite{Herman2017} to conduct SSA, sampling parameters from a range within $\pm15\%$ of their default values. The first and total order Sobol indices are plotted in Figure {\ref{fig: SSA}}.

\begin{figure}
    \centering
    \includegraphics[width=0.49\linewidth]{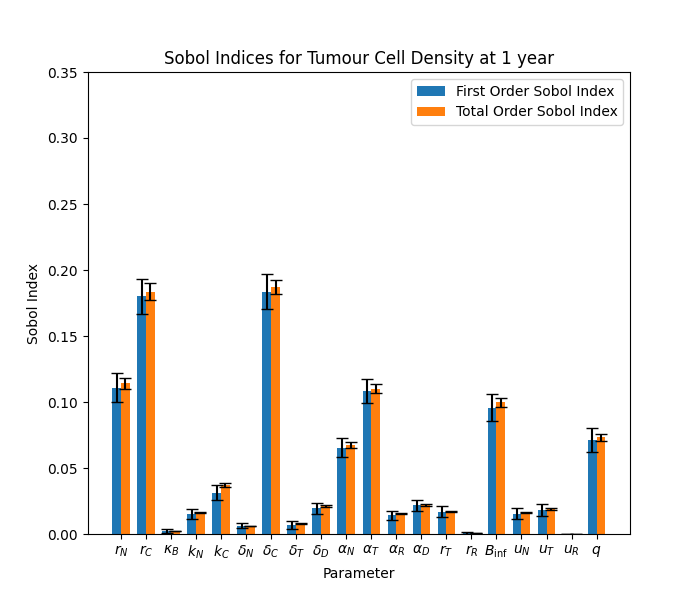}
    \includegraphics[width=0.49\linewidth]{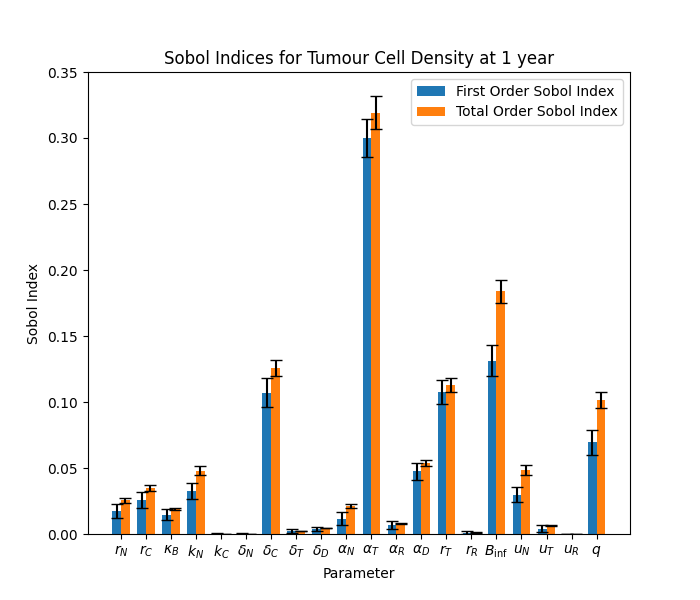}
    \caption[First and total order Sobol indices of model parameters]{First and total order Sobol indices for model parameters on cancer cell density at one year. Parameters were varied uniformly within $\pm15\%$ of their default value. Default values of $B_\text{inf}$ were taken as 10 mM (left) and 30 mM (right). Error bars indicate $95\%$ confidence intervals on indices. 180,224 simulations were run for each plot, each with a different parameter set.}
    \label{fig: SSA}
\end{figure}

We see that the relative sizes of Sobol indices are comparable to the local sensitivity coefficients. This indicates that local and global behaviour are consistent, meaning there are likely no bifurcations in this sampling range that affect parameter sensitivities. Both methods show that for low butyrate levels, cancer density is most sensitive to the intrinsic growth rate ($r_C$) and the killing rate of T cells ($\delta_C$). For high butyrate availability, $B_\text{inf}$ itself as well as parameters describing T cell dynamics, namely T cell death rate ($\alpha_T$), T cell killing rate ($\delta_C$) and T cell proliferation rate ($r_T$) are most influential. 

Results from the sensitivity analyses motivate simulation experiments carried out in the proceeding section. While $r_C$, $r_T$ and $\alpha_T$ are inherent properties of cancer and T cell population dynamics and cannot generally be directly modified in a patient, certain treatments can alter $B_\text{inf}$ and $\delta_C$. Looking at the effect of these treatments is the aim of the next section.

\section{Results}
In this section, we present simulation results obtained from coupling the gut model (\ref{eq: gut l})-(\ref{eq: gut m}) and the cancer model (\ref{eq: N})-\ref{eq: B}).

\subsection{Increased fiber intake enhances butyrate levels in the TME, helping to support normal cells and suppress cancer cells.}

We first investigate the effect of changing the fiber inflow concentration, $I_\text{inf}$, on cell populations. Figure \ref{fig: Iinf separate vars} shows time series for cell populations and butyrate in the TME under low, moderate and high fiber concentrations ($I_\text{inf}=10,20,30$ g/L respectively). We observe that moderate and high values of $I_\text{inf}$ help to decrease the deterioration of healthy colonocytes (a) while causing significant reduction in cancer cell density over time (b). Figures (c), (d) and (e) indicate a drop in immune cell densities as $I_\text{inf}$ is increased, which is consistent with the biological expectation that immune cell density should decrease in response to a decrease in cancer. As expected, butyrate concentration in the TME increases with $I_\text{inf}$ (f).

Figure \ref{fig: Iinf SS} shows the effect of varying $I_\text{inf}$ continuously between 0 and 100 g/L on steady states of the cell populations, suggesting that increasing fiber intake can cause cancer suppression, leading to a decrease in immune cells as well.

\begin{figure}
    \centering    
    \includegraphics[width=1.0\linewidth]{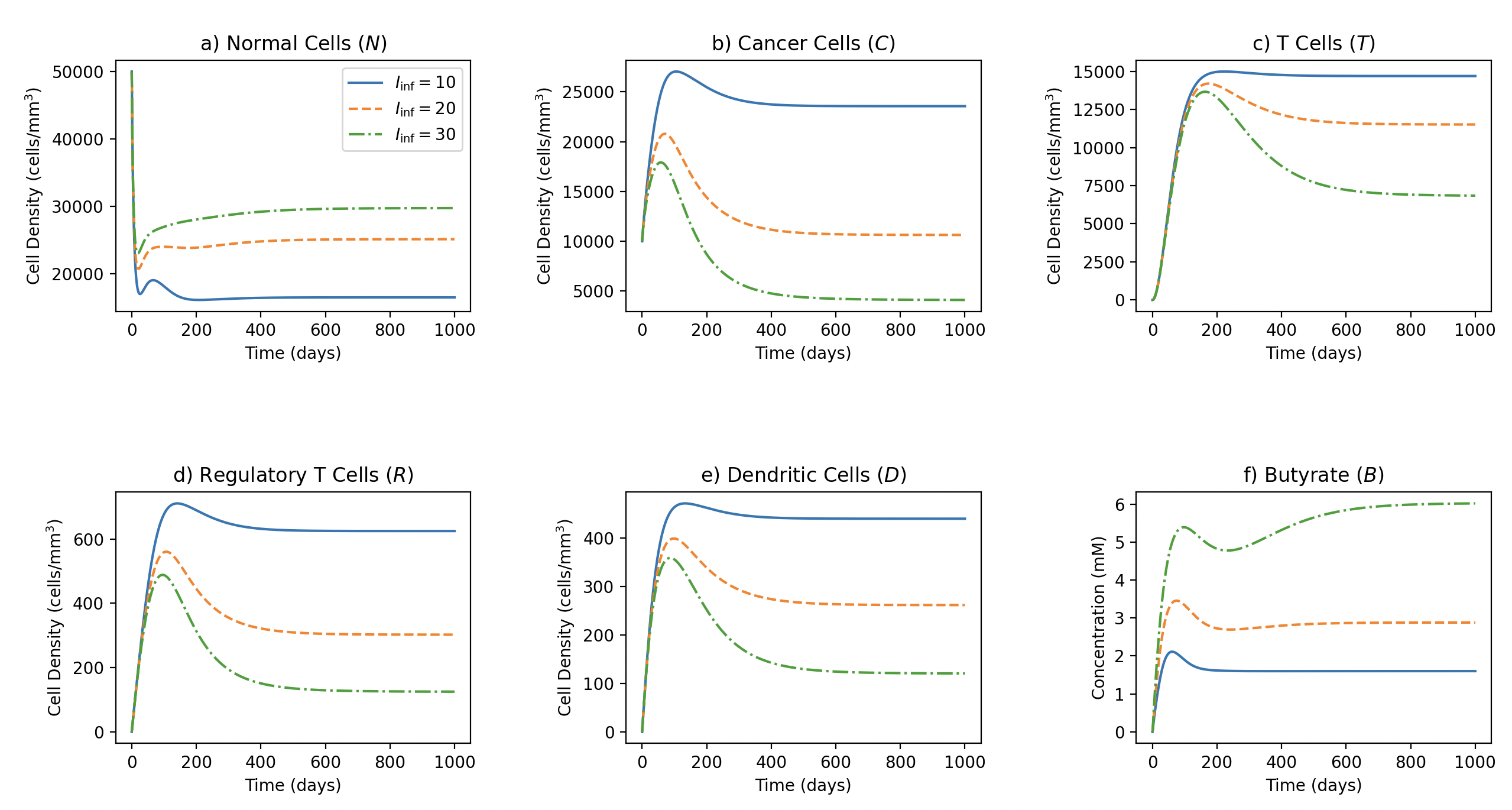}
    \caption[Solutions to the combined model plotted over time with varying amounts of fiber intake]{Solution components representing normal cells (a), cancer cells (b), T cells (c), Tregs (d), dendritic cells (e), and butyrate (f) until steady state for $I_\text{inf}=$ 10, 20 and 30 g/L.}
    \label{fig: Iinf separate vars}
\end{figure}

\begin{figure}
    \centering
    \includegraphics[width=1.0\linewidth]{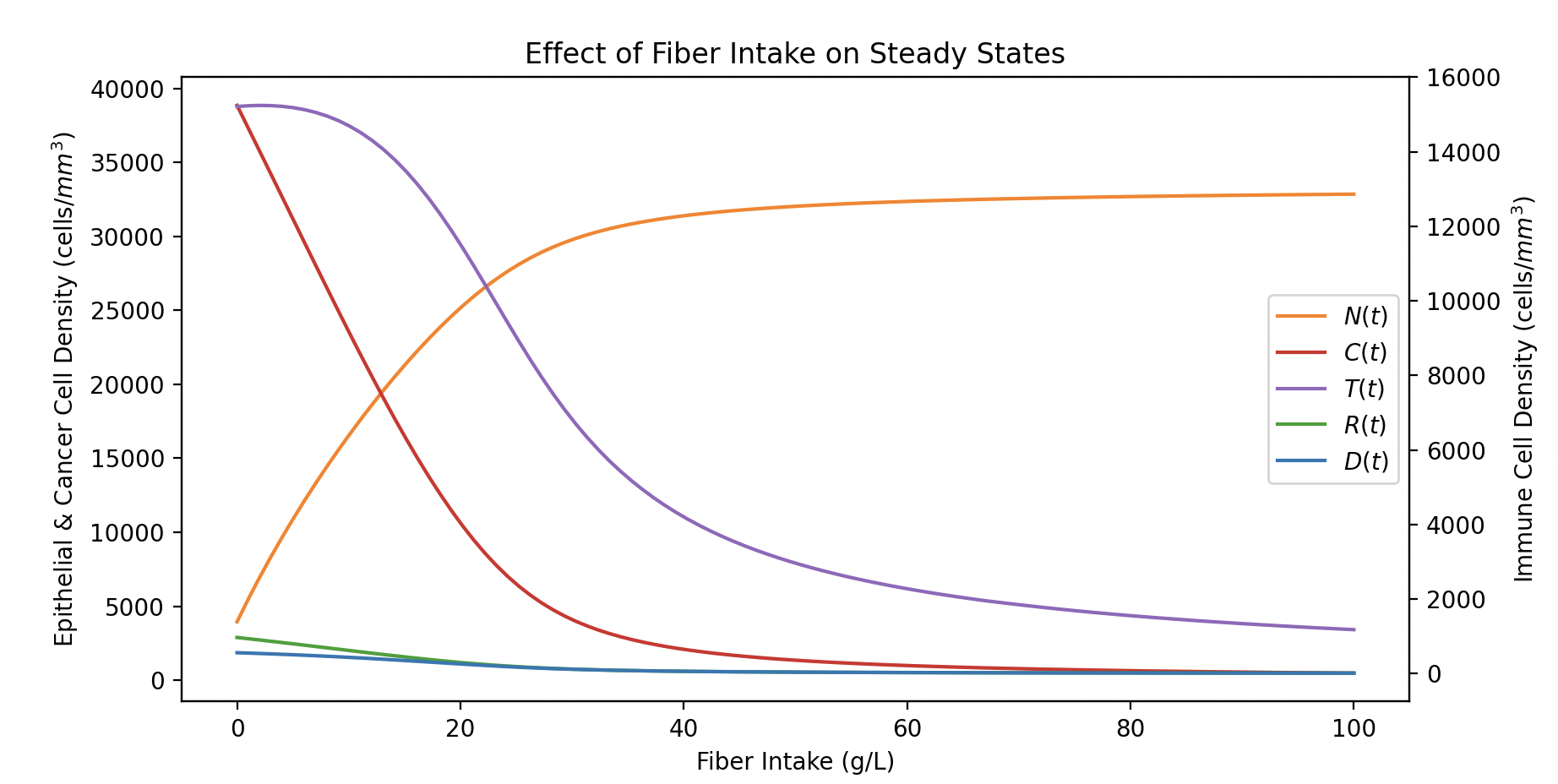}
    \caption[Effect of fiber inflow on cell populations at steady state]{Steady states of cell populations while $I_\text{inf}$ is varied continuously up to 100 g/L.}
    \label{fig: Iinf SS}
\end{figure}

\subsection{Probiotic administration can help suppress cancer growth under dysbiotic conditions in the gut.}
Probiotics are incorporated in the model by adding a constant source term $u$ (with units of g/(L$\cdot$day)) to the equation for sugar degrading biomass in the lumen. When the gut microbiome is well populated with sugar degraders, the addition of probiotics into the system has no significant effect on the amount of butyrate produced during fiber fermentation and thus little to no effect on normal and cancerous cell densities. However, when the gut microbiome is deficient of sugar degraders, the administration of probiotics is essential for butyrate production, regardless of the amount of fiber present. Figure \ref{fig: vary u} shows time series of normal and cancerous cell densities and butyrate concentration in the TME in an environment with low sugar degraders with and without probiotic administration. This result suggests that implementation of probiotics with sufficient fiber intake may be beneficial for boosting butyrate production and slowing cancer growth if the gut microbiome is deprived of sugar degrading biomass.

\begin{figure}
    \centering
    \includegraphics[width=1.0\linewidth]{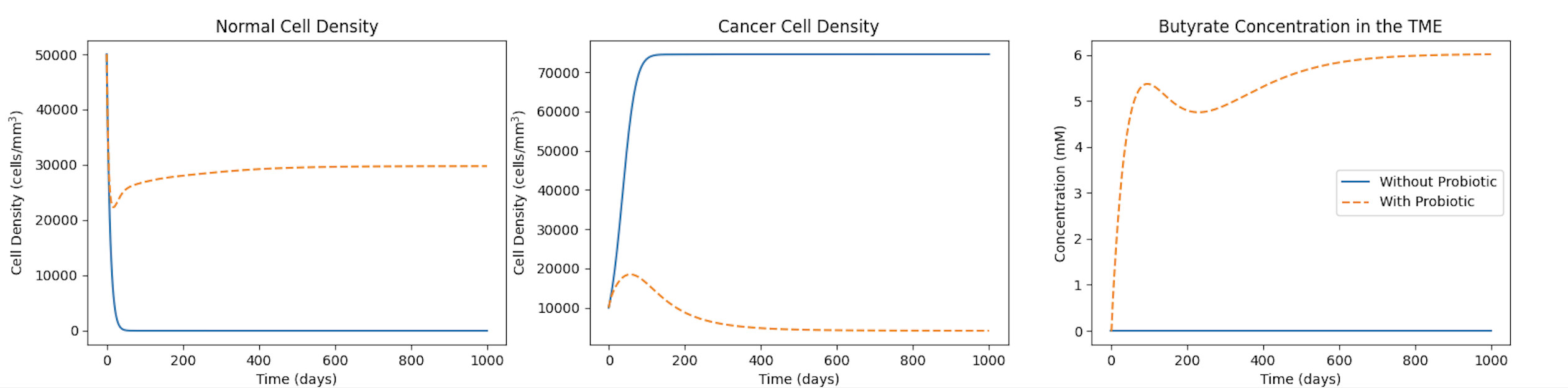}
    \caption[Effect of probiotic administration in the case of gut dysbiosis]{Solution components to the coupled gut-cancer model representing normal (left) and cancerous (middle) cell density as well as butyrate concentration (right) in the TME with and without probiotic treatment. Treatment with probiotics was implemented into the lumen at a constant rate of $u=0.1$ g/(L$\cdot$day). Initial conditions of $X_{1,l}(0)=X_{1,m}(0)=0.01$ g/L were used to simulate a microbiome environment with low concentrations of sugar degraders. $I_\text{inf}$ was set to 30.0 g/L.}
    \label{fig: vary u}
\end{figure}

\subsection{Increased fiber intake can enhance the efficacy of immunotherapy against CRC.}

Immunotherapy is a targeted treatment that helps strengthen the immune system's response to cancer. It has transformed treatment strategies against many types of cancers but immunotherapy efficacy against CRC is currently limited \cite{brahmer2012safety}. Tumours with high microsatellite instability (MSI) are often treated with Programmed Cell Death Protein 1 (PD-1) inhibitors. PD-1 is a protein on T cells that, under normal conditions, help prevent them from attacking other cells in the body. However, only 10\%-15\% of total CRC cases have high MSI, with the rest being microsatellite stable (MSS) \cite{boland2010microsatellite}. Microsatellites are short, repetitive sequences of DNA distributed throughout the genome. These regions are particularly susceptible to errors during DNA replication, most of which are corrected by the DNA mismatch repair (MMR) system. However, when the MMR system is impaired, these errors accumulate and can lead to alterations in microsatellite sequences, causing MSI. Butyrate has been shown to improve efficacy of PD-1 inhibitors in both MSI-high and MSS tumours, indicating it could be a promising treatment to CRC patients who are resistant to PD-1 inhibitors \cite{danne2021butyrate, kang2023roseburia}.

We model the administration of a PD-1 inhibitor by increasing $\delta_C$, which is the model parameter describing T cell response against cancer. Figure \ref{fig: Immunotherapy w B} shows effects of increasing $\delta_C$ beyond its default value of $5\times10^{-6}$ mm$^3$/(d$\cdot$cell) on steady states, with different values of $B_\text{inf}$. Little effect is seen in the long term behaviour of normal cells and butyrate as $\delta_C$ is varied, but a significant drop in the steady states of cancer cell density is observed, which is further amplified by high $B_\text{inf}$ values. Consequently, steady states of immune cell densities follow a similar pattern. Thus, the model suggests that adding butyrate into the system can significantly improve the efficacy of PD-1 inhibitors, which is consistent with experimental results from \cite{kang2023roseburia}. 

\begin{figure}
    \centering
    \includegraphics[width=1.0\linewidth]{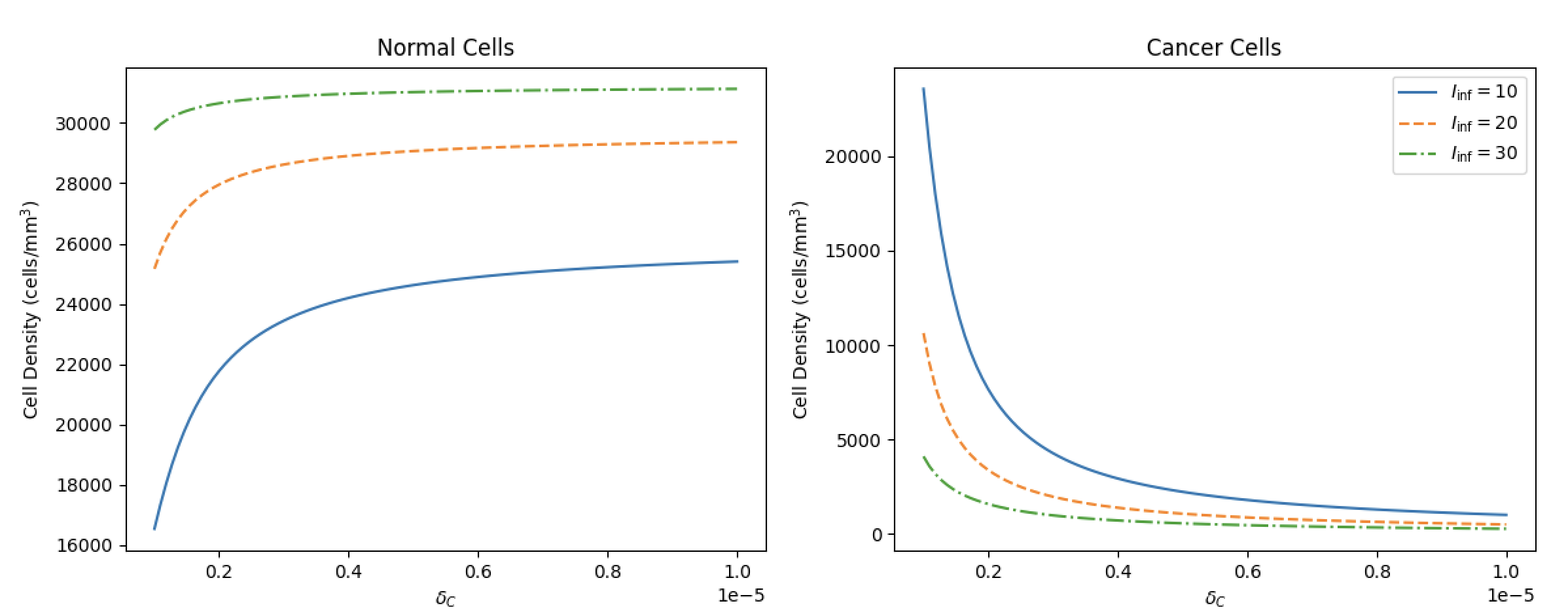}
    \caption[Cell populations at steady state with varying amounts of butyrate and PD-1 inhibitor administration] {Steady states of dependent variables while $\delta_C$ is varied continuously between $5\times10^{-6}$ and $5\times10^{-5}$ mm$^3$/(day$\cdot$cell) for $B_\text{inf} =$ 0, 10, 20 and 30 mM.}
    \label{fig: Immunotherapy w B}
\end{figure}

\subsection{Virtual Clinical Trials}
The use of virtual clinical trials (VCTs) is a popular in-silico technique where simulation experiments are carried out on a virtual patient cohort obtained by sampling model parameters from physiologically plausible ranges \cite{craig2023practical}. This helps to accommodate for biological variability between individuals, making theoretical models valuable tools for enhancing personalized medicine. We use VCTs to explore how different treatment dosages of butyrate and prebiotics affect tumour density. 

Figure \ref{fig: VPs} shows results from running VCTs using our combined model to quantify the effect of $I_\text{inf}$ on tumour density from a sample of 1000 virtual patients, each represented by a unique parameter set. Parameter sets were generated by sampling uniformly within $\pm 15\%$ of each default parameter value. Three treatment groups plus a control group were considered for both butyrate and prebiotic administration. The same patients were used for each treatment and were categorized by relative tumour densities one year into treatment compared to densities at the time of treatment initiation. Virtual patients who didn't receive fiber all had a greater than 50\% increase in tumour density after one year. Patients receiving 15 g/L and 30 g/L of fiber were split between all four categories, whereas those receiving 45g/L all experienced tumour decrease.

\begin{figure}
    \centering
    \includegraphics[width=0.7\linewidth]{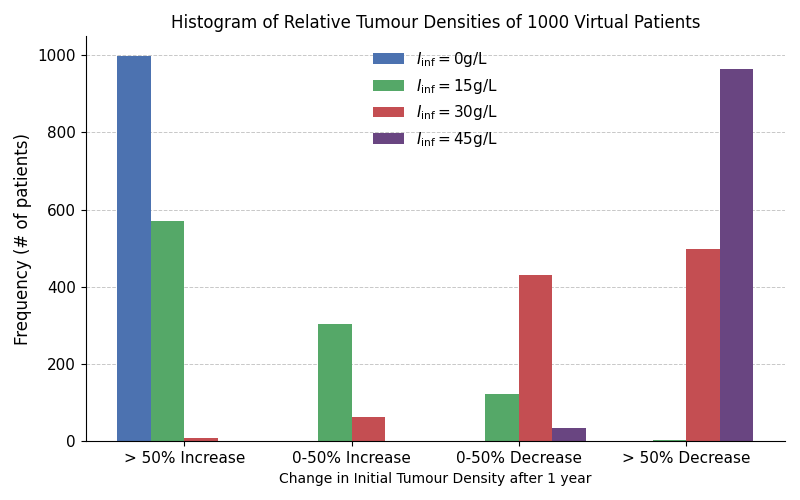}
    \caption[Tumour response from butyrate and prebiotic administration in virtual patients]{Frequency of tumour densities for 1000 generated virtual patients after one year of treatment with different concentrations of fiber intake per day (right). }
    \label{fig: VPs}
\end{figure}

\section{Stochastic Cancer Dynamics}
Cancer growth is influenced by random biological fluctuations such as genetic mutations, changes to the microenvironment, and complex immune system responses, which result in inherent uncertainty. To help account for this variability, a typical approach is to add ``noise" to an underlying cancer growth equation, thereby converting the ODE to a stochastic differential equation (SDE). As is standard in modelling cancer dynamics (see \cite{mansour2022stochastic} for example), we add noise to equation (\ref{eq: C}) that scales with cancer density, converting the equation into an SDE. 

Let $(\Omega,\mathcal{F},\{\mathcal{F}_t\}_{t\geq0},\mathbb{P})$ be a complete probability space with filtration $\{\mathcal{F}_t\}_{t\geq0}$ satisfying the usual conditions (i.e.\ it is increasing, right-continuous, and $\mathcal{F}_0$ contains all $\mathbb{P}$-null sets). 
Adding  noise to Equation (\ref{eq: C}) in the cancer model yields the SDE system
    \begin{align}        
    dX(t)=b(X(t))\,dt+\sigma(X(t))\,dW(t),\label{sde: full system}
    \end{align}
    where $X(t)=(N(t),C(t),T(t),R(t),D(t),B(t))^T\in\mathbb{R}^6$, $b:\mathbb{R}^6\to\mathbb{R}^6$ with $b(X(t))$ representing the vector of right hand sides of equations (\ref{eq: N})--(\ref{eq: B}), $\sigma:\mathbb{R}^6\to\mathbb{R}^{6\times6}$ such that $\sigma(X(t))=\Sigma$ where
    \begin{align*}
        \Sigma_{ij}=
        \begin{cases}
            \sigma_C\,X_i&\text{ if }i=j=2,\\
            0&\text{ otherwise,}
        \end{cases}
    \end{align*}
    and $\sigma_C>0$ (with dimensions $t^{-1/2}$) is the volatility constant, describing the intensity of the noise. The vector $W(t)$ is a six-dimensional standard Brownian motion but note that only its second component, which we call $W_C(t)$, survives the $\Sigma$ transformation. 
\subsection{Existence and Uniqueness}
\begin{theorem}
    For any $T>0$, the system (\ref{sde: full system}) subject to non-negative constant initial data has a unique solution that remains in $\mathbb{R}_{+,0}^6$ up to time $T$. That is, given $X(0)=X_0\in\mathbb{R}^6_{+,0}$,
    \begin{align*}
        \mathbb{P}(X(t)\in\mathbb{R}^6_{+,0})=1\hspace{0.5cm}\text{for all}\hspace{0.5cm}t\in[0,T].
    \end{align*}
\end{theorem}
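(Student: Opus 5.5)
The plan is to use the standard local-existence-plus-stopping-time argument, with the explicit structure of the noise doing most of the work. Only the $C$ equation is stochastic, and its noise is geometric, $\sigma_C C\,dW_C$. This suggests conditioning on the other components and treating $C$ pathwise.

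\textbf{Step 1 (local solution).} The drift $b$ is locally Lipschitz on $\mathbb{R}^6_{+,0}$ (and on a neighbourhood of it where $B>-\kappa_B$), and the diffusion coefficient $\sigma$ is globally Lipschitz. Standard SDE theory therefore gives a unique maximal local strong solution on $[0,\tau_e)$, where $\tau_e$ is the explosion time. Working in the interior $\mathbb{R}^6_+$ first, and passing to the boundary case by continuity or a separate argument, we then show that the solution neither leaves the cone nor explodes before $T$.

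\textbf{Step 2 (positivity of $C$).} For $C$ we exploit the multiplicative form of the equation. Writing $dC = C\,[\,r_C(1-C/k_C)I(B)-\delta_C T\,]\,dt + \sigma_C C\,dW_C$, It\^o's formula applied to $\log C$ (or the stochastic variation-of-constants formula) gives
\begin{align*}
C(t)=C_0\exp\Big(\int_0^t \big[r_C(1-C/k_C)I(B)-\delta_C T-\tfrac12\sigma_C^2\big]ds+\sigma_C W_C(t)\Big)
\end{align*}
on $[0,\tau_e)$. Hence $C(t)>0$ if $C_0>0$, and $C\equiv 0$ if $C_0=0$.

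\textbf{Step 3 (positivity of the remaining components).} Given the continuous path of $C$, the equations for $N,T,R,D,B$ are pathwise random ODEs with continuous coefficients. The positive-invariance argument from the proof of Theorem \ref{thm: det existence} then applies $\omega$ by $\omega$: on each face $x_i=0$, $x_i'\ge 0$ whenever the other components are nonnegative. Where linear structure is available we can also write explicit formulas; for instance, $N(t)=N_0\exp(\int_0^t[\cdots]ds)$, and $D$ satisfies a linear equation with nonnegative source $\varphi_D C$.

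\textbf{Step 4 (non-explosion, the main obstacle).} We must show $\tau_e\ge T$ almost surely. Unlike the deterministic case, $C$ is no longer bounded by $\max\{k_C,C_0\}$, so the bounds on $D$, $R$ and $T$ that used $M_C$ need replacing. I would first try the Lyapunov approach. Define stopping times $\tau_k=\inf\{t: \max_i X_i(t)\ge k \text{ or } \min_i X_i(t)\le 1/k\}$ and apply It\^o's formula to $V(X)=\sum_i (X_i-1-\log X_i)$, with suitable weights. The aim is $\mathcal{L}V\le K(1+V)$. The logistic term $-r_C C^2 I(B)/k_C$ controls the growth of $C$. The $\log$ terms produce contributions such as $-\varphi_D C/D$, which have a favorable sign, and constants like $\delta_C T$ and $\tfrac12\sigma_C^2$, which are bounded by a multiple of $V$.

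An easier fallback handles the components one at a time:
\begin{itemize}
\item $C(t)\le C_0\exp(r_C t+\sigma_C\sup_{s\le T}|W_C(s)|)$, which is almost surely finite on $[0,T]$.
\item $N$ and $B$ keep their deterministic bounds, since those arguments did not use a bound on $C$.
\item $D\le \max\{\varphi_D/\delta_D, D_0\}$, again independent of a bound on $C$.
\item $R$ and $T$ satisfy linear differential inequalities whose forcing terms are bounded pathwise, so Gr\"onwall gives finite pathwise bounds.
\end{itemize}
Together these show $\mathbb{P}(\tau_e\le T)=0$.

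I would present the fallback as the main route, since it avoids verifying the Lyapunov inequality, and this completes the proof.
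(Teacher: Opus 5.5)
Your argument is correct, and its positivity part is the paper's. The paper multiplies $C$ by the integrating factor $U(t)=\exp(\tfrac12\sigma_C^2t-\sigma_CW_C(t))$, so that $Y=UC$ solves the pathwise ODE $Y'=Y\,[r_C(1-C/k_C)I(B)-\delta_CT]$. This is precisely your stochastic variation-of-constants formula. It has a small advantage over applying It\^o to $\log C$: no localization away from $C=0$ is needed, and $C_0=0$ is covered at once. The paper then reuses the deterministic invariance argument for $N,T,R,D,B$, as you do.

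The genuine difference is in non-explosion. The paper uses Khasminskii's test with the linear function $V(x)=1+\sum_ix_i$. Because $V$ is linear, the It\^o correction vanishes, so $\mathscr{L}V=\sum_ib_i(x)\le KV(x)$ follows by discarding the negative terms. This also yields the moment bound $\mathbb{E}\,V(X(t\wedge\tau_k))\le V(X_0)e^{Kt}$ for the exit times $\tau_k$ of $\{|x|<k\}$.

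Your fallback instead derives pathwise a priori bounds on $[0,\tau_e\wedge T)$:
\begin{itemize}
\item $C\le C_0e^{r_Ct+\sigma_C\sup_{s\le T}|W_C(s)|}$ from the exponential formula;
\item the comparison bounds for $N$, $B$ and $D$, which, as you correctly note, never use a bound on $C$;
\item Gr\"onwall for $T$ and $R$.
\end{itemize}
This avoids Khasminskii's theorem and gives explicit random bounds for each component, at the price of component-by-component bookkeeping.

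Your first Lyapunov idea, $\sum_i(X_i-1-\log X_i)$, is heavier than necessary. Since positivity is proved separately, the logarithmic terms buy nothing. They would also force strictly positive initial data, which the default data $T_0=R_0=D_0=B_0=0$ violate. The linear $V$ is the natural choice here.
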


\begin{proof}
    We begin by showing that $X(t)$ remains in $\mathbb{R}^6_{+,0}$ given $X_0\in\mathbb{R}^6_{+,0}$. Starting with $X_2\equiv C$, we have
    \begin{align}
        dC(t)=b_2(C(t),B(t),T(t))\,dt+\sigma_C\,C(t)\,dW_C(t).\label{eq: dC}
    \end{align}
    where $b_2$ is the second component of $b$ (ie. $b_2(C(t),B(t),T(t))$ is the right hand side of equation (\ref{eq: C})). We define the integrating factor $U(t)=e^{V(t)}$ where
    \begin{align*}
        V(t)=\tfrac{1}{2}\sigma_C^2t-\sigma_CW_C(t).
    \end{align*}
    Then, $V(t)$ satisfies
    \begin{align}
        dV(t)=\tfrac{1}{2}\sigma_C^2\,dt-\sigma_C\,dW_C(t). \label{eq: dV}
    \end{align}
    By It\^{o}'s formula (see \cite{oksendal2003stochastic} for example) applied to $f(x)=e^x$,
    \begin{align}
     d\,f(V(t))=dU(t)&=e^{V(t)}dV(t)+\tfrac{1}{2}e^{V(t)}dV(t)^2\nonumber \\
        &=\sigma_C^2\,U(t)\,dt-\sigma_CU(t)\,dW_C(t) ,\label{eq: dU}    
    \end{align}
    where the second inequality comes from plugging in the right hand side of (\ref{eq: dV}) and using that $dt^2=dt\cdot dW(t)=0$ and $dW(t)^2=dt^2$. Then by It\^{o}'s product rule,
    \begin{align*}        d(U(t)C(t))&=C(t)\,dU(t)+U(t)\,dC(t)+dU(t)\,dC(t)\\
    &=U(t)\,b_2(C(t),B(t),T(t))\,dt
    \end{align*} 
    where the second equality comes from plugging in the right hand sides of equations {(\ref{eq: dC})} and {(\ref{eq: dU})} for $dC(t)$ and $dU(t)$, respectively. Then, by setting $Y(t)=U(t)\,C(t)$, we obtain the ODE
    \begin{align*}
        dY(t)=U(t)\,b_2\left(\tfrac{Y(t)}{U(t)},B(t),T(t)\right)\,dt,
    \end{align*}
    for which $Y(t)$ is a solution for almost every $\omega\in\Omega$:
    \begin{align}
        \frac{dY(t)}{dt}&=U(t)\,b_2\left(\tfrac{Y(t)}{U(t)},B(t),T(t)\right)\nonumber\\
        &=r_CY(t)\left(1-\tfrac{C(t)}{k_C}\right)I(B(t))-\delta_CY(t)T(t).\label{sde: Y}
    \end{align}
    This preserves non-negativity of $Y(t)$ given $Y(0)=Y_0\geq0$. Thus, for all $t>0$, since $U(t)>0$, we have $C(t)\geq0$ for almost every $\omega\in\Omega$. Non-negativity of $N(t)$, $T(t)$, $R(t)$, $D(t)$ and $B(t)$ then follows by the argument used for the deterministic system (see the proof of Theorem \ref{thm: det existence}).  

    The coefficients $b(x)$ and $\sigma(x)$ are locally Lipschitz continuous for $x\in\mathbb{R}^6_{+,0}$. Therefore, by the standard theorem for local existence and uniqueness for SDEs (see for example \cite{arnold1974stochastic,friedman1975stochastic}), there is a unique solution $X(t)$ to (\ref{sde: full system}) for $t\in[0,\tau_e)$ where $\tau_e$ is the explosion time. The standard theorem for global existence and uniqueness (see \cite{oksendal2003stochastic} for example) cannot be used in this case, since $b$ and $\sigma$ do not satisfy the required linear growth condition. Instead, we define a Lyapunov function and use Khasminskii’s theorem for almost sure non-explosion \cite{khasminskii2011stochastic}. We
    define the Lyapunov function $V:\mathbb{R}_{+,0}^6\to\mathbb{R}$ as
    \begin{align*}
        V(x)=1+|x|,
    \end{align*}
    where $|x|:=\sum_{i=1}^6x_i$ for $x\in\mathbb{R}^6$. Then clearly $V(x)\geq0$ for $x\in\mathbb{R}^6_{+,0}$, $V(x)\in C^2(\mathbb{R}^6)$, and $V(x)\to\infty$ as $|x|\to\infty$. The generator $\mathscr{L}$ (acting on $V$) for (\ref{sde: full system}) is defined by
    \begin{align*}
        \mathscr{L}V(x)=\sum_{i=1}^6b_i(x)\,\frac{\partial V(x)}{\partial x_i}+\frac{1}{2}\sum_{i,j=1}^6a_{i,j}(x)\,\frac{\partial^2 V(x)}{\partial x_i\partial x_j},
    \end{align*}
    where $a:=\sigma\sigma^T$. Thus, 
    \begin{align*}
        \mathscr{L}V(x)=\sum_{i=1}^6b_i(x)+\frac{1}{2}\sigma_C^2\,\frac{\partial^2 V}{\partial x_2^2}=\sum_{i=1}^6b_i(x),
    \end{align*}
    since the second partial derivative of $V$ is zero. Then, 
    \begin{align*}
        \sum_{i=1}^6b_i(x)&\leq r_Nx_1+(r_C+\varphi_R+\varphi_D)x_2+r_Tx_3+r_Rx_4+\varphi_Tx_5+q_BB_\text{inf}\leq K(|x|+1),
    \end{align*}
    where $K=\max\{r_N,r_C+\varphi_R+\varphi_T,r_T,r_R,\varphi_T,q_BB_\text{inf}\}$. Therefore, we have
    \begin{align*}
        \mathscr{L}V(x)\leq KV(x).
    \end{align*}
    By Khasminskii's theorem, $P(\tau_e=\infty)=1$ implying that for any initial condition $X_0\in\mathbb{R}^6_{+,0}$, there is a unique, global solution to (\ref{sde: full system}).\\
\end{proof}

\subsection{Noise-Induced Tumour Extinction}\label{sec: NITE}
For the deterministic system (\ref{eq: N})--(\ref{eq: B}), we found that with default parameter values, cancer cell density persists over time. However, in the stochastic system, cancer elimination as $t\to\infty$ is guaranteed when the noise intensity is large enough. We summarize this result in the following theorem:
\begin{theorem}
    If $\sigma_C>\sqrt{2r_C}$ and $X_0\in\mathbb{R}^6_{+,0}$, then the cancer will eventually almost surely die out. That is,
    \begin{align}            \mathbb{P}\left(\lim_{t\to\infty}C(t)=0\right)=1.\label{eq: extinction result}
    \end{align}
\end{theorem}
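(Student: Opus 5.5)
The plan is to apply It\^{o}'s formula to $\ln C(t)$ and compare the drift of the logarithm with the quadratic-variation correction supplied by the noise. If $C_0=0$, then the integrating-factor argument in the proof of the previous theorem gives $Y\equiv 0$. Hence $C\equiv0$ and there is nothing to prove, so we assume $C_0>0$.

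By the same integrating-factor argument, $C(t)=Y(t)/U(t)$, where $Y$ solves a linear ODE with continuous coefficients and $Y(0)>0$. It follows that $C(t)>0$ for all $t$ almost surely, so $\ln C(t)$ is well defined. It\^{o}'s formula applied to (\ref{eq: dC}) with $f(x)=\ln x$ gives
\begin{align*}
d\ln C(t)=\left[r_C\left(1-\tfrac{C(t)}{k_C}\right)\tfrac{\kappa_B}{B(t)+\kappa_B}-\delta_C T(t)-\tfrac12\sigma_C^2\right]dt+\sigma_C\,dW_C(t).
\end{align*}

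Next we bound the drift from above. Since $T,B\ge0$ and $C>0$ by the nonnegativity part of the previous theorem, and since $0<\kappa_B/(B+\kappa_B)\le 1$, the factor $1-C/k_C$ is at most $1$. Therefore the bracket is bounded above by $r_C-\tfrac12\sigma_C^2$. Integrating from $0$ to $t$ yields
\begin{align*}
\ln C(t)\le \ln C_0+\left(r_C-\tfrac12\sigma_C^2\right)t+\sigma_C W_C(t).
\end{align*}
This step requires no bounds on $T$ or $R$, so the time-dependent bounds of Theorem \ref{thm: det existence} are not needed.

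We then divide by $t$ and let $t\to\infty$. The strong law of large numbers for Brownian motion gives $W_C(t)/t\to0$ almost surely. We therefore obtain
\begin{align*}
\limsup_{t\to\infty}\frac{\ln C(t)}{t}\le r_C-\tfrac12\sigma_C^2<0 \quad\text{a.s.},
\end{align*}
where strict negativity is exactly the hypothesis $\sigma_C>\sqrt{2r_C}$. Consequently $\ln C(t)\to-\infty$ almost surely, and in fact $C(t)\to0$ exponentially fast, which establishes (\ref{eq: extinction result}).

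The main technical point is justifying the application of It\^{o}'s formula to $\ln C$, which requires almost sure strict positivity of $C$ on the whole time interval. This follows from the representation $C=Y/U$ with $U>0$ and $Y(0)>0$ solving a linear ODE, which cannot reach zero in finite time. It is combined with the global existence result of the previous theorem, which ensures that the argument holds for all $t$ rather than only up to the explosion time.
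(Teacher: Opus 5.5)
Your proposal is correct and follows essentially the same route as the paper. Both apply It\^{o}'s formula to $\ln C$, bound the drift by $r_C-\tfrac12\sigma_C^2$ using nonnegativity and $I(B)\le 1$, and invoke the strong law $W_C(t)/t\to 0$. Your separate treatment of $C_0=0$ and your justification of strict positivity of $C$ via $C=Y/U$ fill in details the paper leaves implicit, since its proof uses $\ln C(0)$ without comment.
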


\begin{proof}
    The proof is adapted from \cite{liu2015sufficient} where this result is shown for the stochastic generalized logistic equation. Applying It\^{o}'s formula to $f(x)=\ln{x}$ yields
    \begin{align*}
        d\ln{C(t)}=\left[r_C\left(1-\frac{C(t)}{k_C}\right)I(B(t))-\delta_CT(t)-\frac{\sigma_C^2}{2}\right]dt+\sigma_C\, dW(t).
    \end{align*}
   
    \noindent Dividing both sides by $t>0$ and converting to integral form gives
    \begin{align*}
        \frac{\ln{C(t)}}{t}=\frac{\ln{C(0)}}{t}+\frac{1}{t}\int_0^t\left[r_C\left(1-\frac{C(s)}{k_C}\right)I(B(s))-\delta_CT(s)-\frac{\sigma_C^2}{2}\right]ds+\frac{\sigma_C}{t}\int_0^tdW(s).
    \end{align*}
    By taking $t\to\infty$ on both sides, we obtain
    \begin{align*}
        \limsup_{t\to\infty}\frac{\ln{C(t)}}{t}&= \limsup_{t\to\infty}\frac{1}{t}\int_0^t\left[r_C\left(1-\frac{C(s)}{k_C}\right)I(B(s))-\delta_CT(s)-\frac{\sigma_C^2}{2}\right]ds\\
        &\leq\limsup_{t\to\infty}\frac{1}{t}\left[\int_0^tr_C\,ds-\frac{\sigma_C^2}{2}t\right]\\
        &=r_C-\frac{\sigma_C^2}{2}
    \end{align*}
    where we used the fact that the stochastic integral evaluates as $\frac{1}{t}\int_0^tdW(s)=\frac{W(t)}{t}$ and thus by the strong law of large numbers (see \cite{mao2006stochastic} for example), 
    \begin{align*}
        \lim_{t\to\infty}\frac{W(t)}{t}=0\hspace{0.2cm}a.s.
    \end{align*}
    We also used the property that $C(t)$, $B(t)$ and $T(t)$ are non-negative and $0\leq I(B(t))\leq1$ for all $t>0$. Thus, we have
    \begin{align*}        &\limsup_{t\to\infty}\frac{\ln(C(t))}{t}\leq r_C-\frac{\sigma_C^2}{2}\hspace{0.2cm}a.s.
    \end{align*}
    By the definition of the limit superior, for all $\varepsilon>0$, there is some time $\tau$ such that for all $t>\tau$,
    \begin{align*}
        \frac{\ln(C(t))}{t}\leq -\alpha+\varepsilon,
    \end{align*}
    where $\alpha:=\frac{\sigma_C^2}{2}-r_C>0$ assuming $\sigma_C>\sqrt{2r_C}$. Setting $\varepsilon=\frac{\alpha}{2}$ then yields
    \begin{align*}
        C(t)\leq e^{-\frac{\alpha}{2}t},
    \end{align*}
    and (\ref{eq: extinction result}) follows.\\
\end{proof}
Sample solution paths were generated using {\tt sdeint}, the Python library for numerical integration of It\^{o} and Stratonovich SDEs \cite{sdeint}. Figure \ref{fig: SDE mean solns} shows the mean cancer cell density over time from 500 sample paths for $\sigma_C<\sqrt{2r_C}$ (left) and $\sigma_C>\sqrt{2r_C}$ (right), demonstrating the suppressive effect of high noise intensity. Figure \ref{fig: SDE vary sigma} shows the effect of continuously varying $\sigma_C$ relative to $\sqrt{2r_C}$ on the mean normal and cancer cell densities at day 1000 over 100 sample paths for each $\sigma_C$ value with different values of $B_\text{inf}$. For both low and high $B_\text{inf}$, we see the mean cancer cell densities dying out, consistent with our theoretical result. Additionally, the plots suggest that higher butyrate levels decrease the critical noise intensity required to suppress the cancer. 

\begin{figure}
    \centering
    \includegraphics[width=0.49\linewidth]{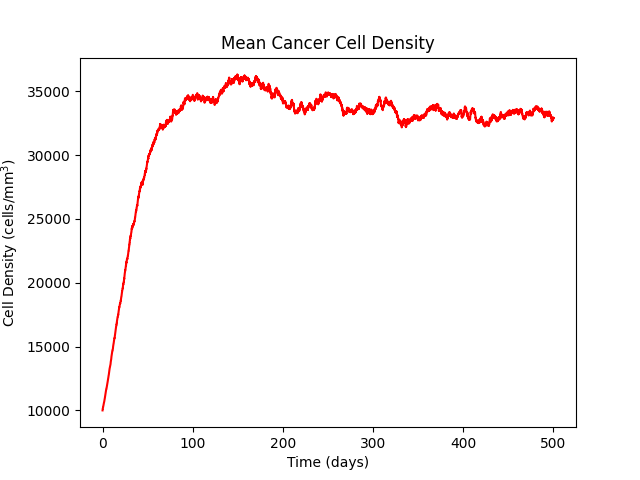}
    \includegraphics[width=0.49\linewidth]{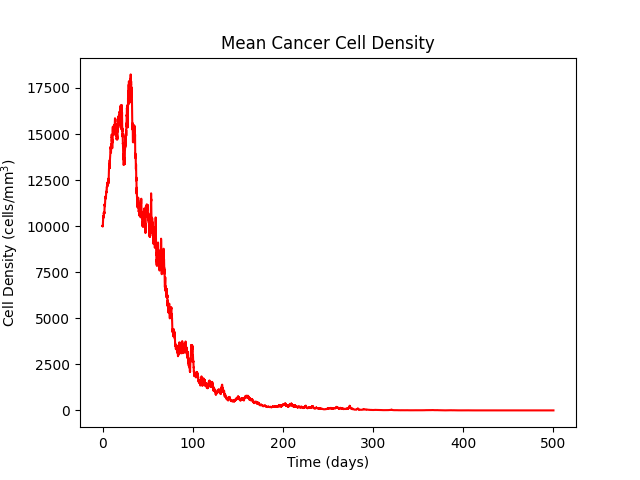}
    \caption[Mean cancer cell density over time]{Mean cancer cell density over time taken from 500 sample paths with $\sigma_C=0.1$ days$^{-1/2}$ (left) and $\sigma_C=0.4$ days$^{-1/2}$ (right).}
    \label{fig: SDE mean solns}
\end{figure}

\begin{figure}
    \centering
    \includegraphics[width=0.49\linewidth]{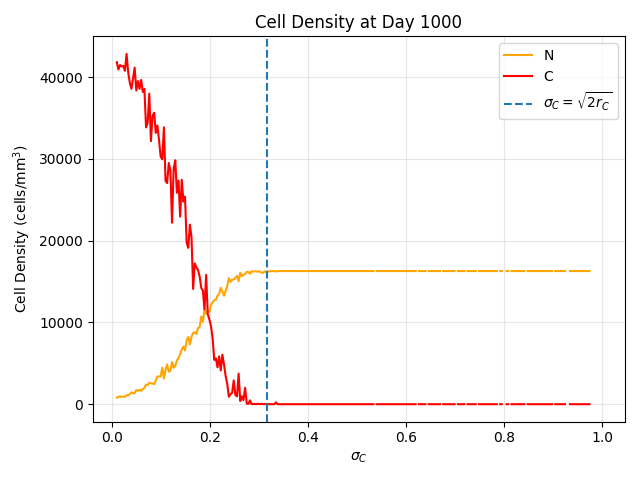}
    \includegraphics[width=0.49\linewidth]{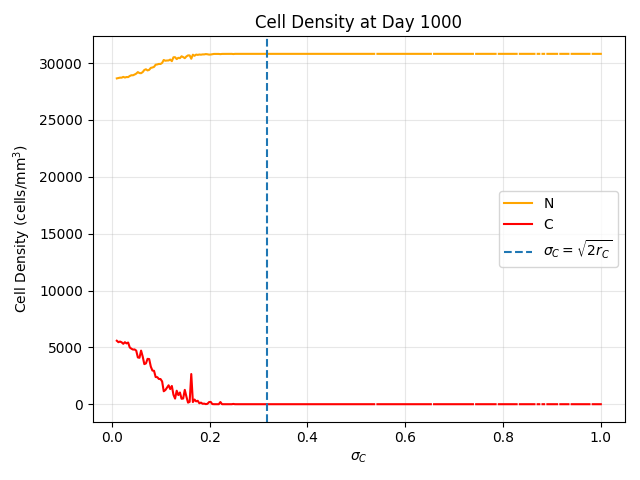}
    \caption[Effect of noise intensity and butyrate availability on cell densities]{Mean density of normal and cancer cells at day 1000 taken from 100 sample paths as $\sigma_C$ is varied continuously up to 1.0 days$^{-1/2}$  with $B_\text{inf}=5$ mM (left) and $B_\text{inf}=20$ mM (right). The line $\sigma_C=\sqrt{2r_C}$ is shown for reference.}
    \label{fig: SDE vary sigma}
\end{figure}

\section{Discussion}

\subsection{Summary of Findings}

We constructed a model describing interactions between normal colonocytes, colon cancer cells, immune cells and butyrate in the TME. While many mathematical models have been developed that incorporate interactions between cancer and immune cells, few include the effects of gut microbiota \cite{hadjigeorgiou2025mathematical}, despite growing evidence of their significance in CRC. In particular, butyrate produced from fiber fermenting bacteria has been shown to slow cancer growth through metabolic and immunological mechanisms \cite{li2018butyrate, sun2024butyrate}. We include these dynamics in our model by incorporating butyrate from the gut model, and three key types of immune cells - T cells, regulatory T cells and dendritic cells. In low butyrate concentrations, the general behaviour of cell densities in our model is consistent with established cancer-immune models and biological literature. We find that normal and cancer cells compete with each other - only one of which can persist at high densities as seen in \cite{alharbi2020new}. Similar to findings from models developed in \cite{kirshtein2020data} and \cite{raeisi2024mathematical}, our model shows that T cells and dendritic cells grow over time with increasing cancer cell density. Additionally, Tregs grow with cancer cell density, as found in \cite{de2013mathematical}. With default parameters and low butyrate availability, our model suggests that immune cells can not eliminate cancer, consistent with \cite{alharbi2020new}. 

We (weakly) coupled the gut and cancer models to determine the effects of fiber on cell dynamics. Increased fiber inflow results in higher butyrate production during fermentation, leading to decreased cancer cell density and increased normal cell density in our model. When accompanied by sufficient fiber intake, our model showed that probiotic treatment may also be beneficial if SCFA-producing gut bacteria are scarce, which is a common occurrence in CRC patients \cite{loftus2021bacterial}. This result is consistent with experimental findings discussed in \cite{shrifteylik2023current}, which suggest that fiber can help reduce cancer. Conversely, if there are already sufficient levels of sugar degrading biomass in the gut, our model suggests that probiotics show no additional benefit to butyrate production or cancer inhibition. The modelling work in \cite{vadivelmathematical} showed that antibiotics can cause dysbiosis by disrupting the natural microbial composition of the gut. Similarly, the lack of probiotic benefit in a healthy gut environment could be attributed to an overpopulation of certain species, causing microbial imbalance.

Finally, we considered a stochastic version of our cancer model to help account for inherent biological variability in the system. In comparison to the deterministic model, the stochastic model can give rise to very different solution behaviour. In particular, we showed that when the noise intensity is sufficiently high relative to the intrinsic cancer growth rate, the cancer will die out in the limit as $t\to\infty$ with probability one. This is consistent with the extinction result proved for the stochastic generalized logistic equation \cite{liu2015sufficient}. We also found that higher butyrate concentrations decrease the noise intensity threshold required to guarantee cancer elimination. 

\subsection{Limitations}
Limitations on biological models are inevitable due to the high complexity of the systems they describe. Here, we discuss a non-exhaustive list of the main limitations of our model.

Due to the lack of available data, our equations are based on related mathematical models and qualitative findings from the biological literature, but the specific functions could be argued differently. For example, even though the Gompertz growth model tends to fit tumour data more accurately \cite{tabassum2019mathematical}, we chose to model cancer growth logistically for the sake of simplicity. Additionally, we used standard sigmoidal saturation functions for butyrate uptake and inhibition ($g(B)$ and $I(B)$ respectively) to describe the metabolic effects of butyrate on normal versus cancer cells, but in reality, these interactions are much more complicated. We also assumed that immune interactions are linear, which although standard and used in similar models \cite{alharbi2020new, de2001mathematical, mohammad2022investigating, kirshtein2020data}, may be an oversimplification of the underlying mechanisms. 

Furthermore, simulation results are parameter-dependent and thus may only be valid for the parameter values we used. We attempted to mitigate this by performing sensitivity analysis to quantify the influence of parameters on model outputs as well as virtual clinical trials to look at results across a range of parameter sets. However, even these results only capture the dynamics arising from parameter sets that are within a given distance of our default values, which were obtained from various sources in the literature and may not be completely accurate. 

Throughout our research, we assumed spatial homogeneity, allowing for the use of ODEs rather than partial differential equations (PDEs). For the gut model, this means treating each compartment as a continuously stirred tank reactor (CSTR). Different approaches have been used to model spatial dynamics in the gut. For example, in \cite{munoz2010mathematical}, three reactors are connected in series to represent the proximal, transverse and distal parts of the colon discretely, where each compartment is treated as a CSTR. Alternatively, one-dimensional transport PDEs were used in \cite{moorthy2015spatially} to model the continuous flow of materials through the colon. Spatial variation was not included in the cancer model either, which could give rise to different results.

In \cite{fulbright2017microbiome}, the authors review the effects of certain microbial species on the Hallmarks of Cancer. For example, \textit{B. thetaiotaomicron} (a sugar degrading generalist) has been shown to improve T cell response against cancer although the mechanisms in which this occurs are unknown. There is also evidence suggesting that the microbiome - CRC relationship may be bidirectional, with cancer affecting the microbial composition of the gut \cite{duttabidirectional}. These effects were not included in our model. 

There are several alternative formulations that we could have used to introduce stochasticity into our model. We could have added noise to all the equations, as they also describe biological processes that are inevitably exposed to random fluctuations. However, we make the argument that cancer cell dynamics, which are highly dependent on random mutations, are likely to experience higher variability than host cell dynamics. For the sake of simplicity, we assumed that noise scales linearly with cancer cell density, but we could have considered other relationships. For example, in \cite{azizi2026advancing}, the stochastic effects scale with the square of the cancer cell density. Alternatively, instead of SDEs, we could have used random differential equations (RDEs), which incorporate randomness via parameters and give rise to solutions with differentiable sample paths, thus not requiring It\^{o} calculus.

\section{Conclusions}
Mathematical models are frequently used in biology to help with the qualitative understanding of complex systems. The main contribution of this work arises from the development and analysis of a mathematical model linking carbohydrate fermentation and colon cancer dynamics.

\begin{itemize}
    \item  The behaviour of our model is consistent with established experimental and theoretical findings that are independent of our underlying model assumptions. This indicates that connecting gut models and cell dynamics may be a promising approach for studying CRC.
    \item Our model predicts that dietary fiber enhances butyrate production, thus increasing healthy and decreasing cancerous cell densities. Dietary fiber may also have a beneficial impact on immunotherapy efficacy for CRC patients. In contrast, probiotic supplementation has little effect on SCFA production in a healthy gut but may provide a supportive benefit under dysbiotic conditions.
    \item  We find that by adding noise with sufficiently high intensity to cancer cell dynamics, in the limit as $t\to\infty$, cancer dies out almost surely. We also find that higher butyrate concentrations in the TME decrease this threshold noise intensity, weakening the sufficient conditions for cancer elimination.
\end{itemize}
While primarily theoretical, the findings from our model can be used as a basis for further modelling and experimental work. 

\pagebreak

\onehalfspace

\addcontentsline{toc}{chapter}{Bibliography}
\printbibliography

\end{document}